\documentclass[a4paper,12pt]{article}
\usepackage{geometry}

\usepackage{csquotes}
\usepackage[section]{placeins}
\usepackage[english]{babel}
\usepackage[utf8]{inputenc}
\usepackage[T1]{fontenc}
\usepackage{multirow}
\usepackage{listings}
\usepackage{verbatim}

\usepackage{enumerate}
\usepackage{enumitem}
\usepackage{standalone}
\usepackage[url=false,isbn=false,giveninits=true]{biblatex}
\bibliography{sources}

\usepackage[utf8]{inputenc}
\usepackage[english]{babel}
\usepackage[T1]{fontenc}
\usepackage{amsmath, amsthm, amssymb}
\usepackage{mathtools}
\usepackage{amsfonts}
\usepackage{amssymb}
\usepackage{graphicx}
\usepackage{braket}
\usepackage{relsize}
\usepackage{comment}
\usepackage{pifont}
\usepackage[makeroom]{cancel}
\usepackage[absolute,overlay]{textpos}
\usepackage{scalerel}
\usepackage{xcolor}
\usepackage{changepage}
\usepackage{needspace}

\usepackage[font=small,labelfont=bf,tableposition=top]{caption}
\usepackage{url,amssymb,floatrow,graphicx,amsmath,upgreek,textcomp,tabularx,subcaption,caption,gensymb,float}  
\usepackage{wrapfig,svg}
\usepackage{array, makecell, booktabs}
\usepackage{float}
\usepackage[absolute,overlay]{textpos}
\usepackage[nottoc,numbib]{tocbibind}
\usepackage{url}
\usepackage{hyperref}
\hypersetup{
    colorlinks=true,
    linkcolor=red,
    citecolor=blue,
    urlcolor=blue,
}
\urlstyle{same}

\usepackage{subfiles} 

\newtheorem{proposition}{Proposition}
\newtheorem{corollary}{Corollary}
\newtheorem{theorem}{Theorem}

\newtheorem{conjecture}{Conjecture}
\theoremstyle{definition}
\newtheorem{definition}{Definition}
\newtheorem{example}{Example}

\newtheorem{remark}{Remark}

\DeclarePairedDelimiter\inner{\langle}{\rangle}
\DeclarePairedDelimiter\abs{\lvert}{\rvert}
\DeclarePairedDelimiterX{\norm}[1]{\lVert}{\rVert}{#1}
\DeclarePairedDelimiterX{\parr}[1]{(}{)}{#1}

\DeclareMathOperator{\Tr}{Tr}

\DeclareMathOperator{\Image}{Im}
\DeclareMathOperator{\diag}{diag}
\DeclarePairedDelimiter\br{\langle}{\rvert}
\DeclarePairedDelimiter\ke{\lvert}{\rangle}
\DeclareMathOperator{\povm}{POVM}

\DeclareMathOperator{\sa}{sa}

\begin{document}
\newgeometry{right=0cm,left=0cm,top=2.0cm,bottom=1cm}

\clearpage  
\thispagestyle{empty}

\begin{center}
   \LARGE{
   \vspace*{10mm}
   DIPLOMA THESIS\\\vspace{30mm}
   On a conjecture regarding quantum hypothesis\\ testing\\\vspace{30mm}
   Zsombor Szilágyi
   }
\\\vspace{60mm}
\end{center}

\begin{table}[H]
 \large{
\begin{tabular}{lll}
\hspace{40mm} Supervisor: \hspace{10mm}  & Mihály Weiner  &  \\
            & associate professor  &  \\
            & BME Institute of Mathematics & \\ 
            & Department of Analysis & \\\vspace*{10mm}
\end{tabular}
}
\end{table}

\begin{center}
     \large{BME \\ 2020}
\end{center}

\restoregeometry

\newpage
\setcounter{tocdepth}{2}
{
  \hypersetup{linkcolor=black}
  \tableofcontents
}
\pagebreak

\setlength{\parindent}{0em}
\setlength{\parskip}{\medskipamount}  


\section*{Preface}
\addcontentsline{toc}{section}{\protect\numberline{}Preface}

The first time I got introduced to a more mathematical point of view of quantum physics was when I studied \textit{Functional Analysis} from Mihály Weiner at BME (in the spring of 2018), who my supervisor is now. I already had a deep interest in quantum physics, but with the beautiful mathematics involved it become even more exciting to me. In the next semester I attended lectures in \textit{Matrix analysis} and \textit{Quantum probability theory} - this latter was a course Mihály Weiner gave for American students at Budapest Semesters in Mathematics (BSM). In the same semester I also attended a course called \textit{Introduction to quantum information theory} presented by Milán Mosonyi, which introduced me to the finite dimensional quantum statistical physics in a very detailed and mathematically rigorous way. The next semester I attended the second part of this course 
(\textit{Selected topics in modern quantum information theory}), where I first met the topic of \textit{quantum hypothesis testing} (or \textit{quantum state discrimination}) and the multitude of unsolved questions about it. I also attended a course on quantum computers offered by our Department of Theoretical Physics  (\textit{Quantum Information Processing}) where we could use remotely accessible quantum computers and could experience how qubits work in practice.\\

What became my area of interest is the surprising and extremely fruitful interaction of quantum physics and the part of mathematics usually referred to functional analysis (and/or matrix-analysis). I am therefore very thankful to all of my teachers I met during my university studies; both to those, who helped me to acquire a good physics background and also those who provided me the necessary mathematics to my chosen subject. I am particularly grateful to Mihály Weiner and Milán Mosonyi for the great motivating lectures, Péter Bálint and Péter Lévay who first introduced me to the beauty of matemathics, László Szunyogh and Balázs Dóra who taught me quantum mechanics, Gergely Zaránd and János Török for the great classical mechanics and statistical physics lectures, András Pályi and László Orosz for the instructive sessions. I would also like to use the occasion to thank my parents and my cousins who supported and encouraged me during my university studies. Finally, I would like to gratefully acknowledge the financial support provided by the MTA-BME Quantum Information Theory Research Group, and by the NRDI grants K124152 and KH129601. \\

During the quantum probability theory course I frequented at BSM I got to know a math student (from Cornell University) who was also interested in quantum information theory. We started to work on a problem introduced by Mihály Weiner to us. This was a very interesting question regarding \textit{mutually unbiased bases} (MUBs), which are matrix-algebraic structures naturally arising in various quantum informational protocols (e.g., in some cryptographic one). The research went well, and together with Mihály Weiner we managed to answer the question which eventually resulted in the paper \textit{Rigidity and a common framework for mutually unbiased bases and k-nets} \cite{mutually} recently accepted in the prestigious Journal of Combinatorial Designs. \\

The above research was also (one of) the subjects indicated in my original thesis topic. However, after
resolving the issue and still having several months till the deadline of the thesis submission, I decided that rather than just reporting on a completed work, I should 
look for new questions and - if I am successful - make my thesis on my latest investigations. Still working on the general area of quantum information theory, on a seminar held by Milán Mosonyi I became interested on a problem regarding quantum hypothesis testing. The aim was to prove or disprove a conjecture raised several years ago \cite{Milan} about the best achievable asymptotic error rate in \textit{composite hypothesis testing}. Following the suggestion of my supervisor Mihály Weiner I first considered several particular cases proposed by him where I could prove (or find a counterexample to) the conjecture. While at this point I cannot report on a complete success, I managed to resolve the issue
in one of the interesting cases proposed. I therefore decided to slightly change my originally declared research topic (and consequently also the title) and center my thesis work around the topic explained above. So in this thesis I will give an introduction to quantum hypothesis testing, explain the conjecture in question, the importance of the particular case I considered, and my proof regarding this case.
\\

\newpage
\section{Introduction}\label{sec:Int}
\subsection{What is hypothesis testing, and why is it important?}

Roughly speaking, \textit{hypothesis testing} (or \textit{state discrimination}) is the problem of distinguishing states (of a physical system) from each other. This problem appears at the end of any communication or computation (or measurement) process, where the resulting set of possible outcomes is described by a set of states of the system and one has to identify the true outcome. So the relevance of this task is obvious. The physical system could be either classical or quantum, but in this thesis we are focusing to the quantum case.

In more detail, the easiest way is to represent the problem with the following  communication scenario. Consider that a sender (Ailce) wants to communicate
one of $r$ possible messages to a receiver (Bob). To achieve this goal, they use a $d$-level quantum physical system (i.e., one whose Hilbert space is $d$-dimensional). Alice can encode her message $i\in[r]$ into a state $\varrho_i$ of the system and send it to Bob, who wants to decide which message he got. Bob, knowing that the only possible states are the previously agreed operators $\varrho_i$, tries to perform a measurement which gives the correct answer with the best probability. One may assume that the message of Alice is the value of a classical $r$-bit, which is in fact a classical random variable taking values in the set $[r]=\{1,\ldots ,r\}$ with corresponding probabilities $p_1,\ldots ,p_r$. The situation can be visualised as follows. \\
\begin{table}[H]
\begin{tabular}{cccc}
 &\textit{Alice} &  Q channel = id    & \textit{Bob}    \\
 & CL r-bit   & $\varrho_i \ \longrightarrow \ \varrho_i$ & CL r-bit      \\[10pt]  \hline\\[-5pt] 
$p_1 \quad$                   & 1: $\varrho_1$                     &                                                           & 1: $M_1$     \\[10pt]
$p_2 \quad $               & 2: $\varrho_2$                     &                                                           & 2: $M_2$     \\[10pt]
               &   $\vdots$                     &                                                           &   $\vdots$     \\[10pt]
$p_r \quad $               & $r$: $\varrho_r$                     &                                                           & $r$: $M_r$     \\[10pt]
\end{tabular}
\end{table}
Here the operators $\varrho_i$ are the encoding states and $M_i$ are the positive operators forming a POVM which describe (from point of view of outcome probabilities) the particular measurement procedure chosen by Bob.

Ideally, they should use states which are distinguishable with certainty (this is called \textit{perfect state discrimination}). However, the state set by Alice may be corrupted ``on the way'', e.g., by some noise in the channel of communication. Here, instead of considering noisy channels, we shall formally regard our channel a perfect one (i.e., it is the identity map), but restrict Alice from using arbitrary states for encoding. To put it another way, we shall simply ``prescribe'' to Alice which states she must use. So it makes sense to consider not only perfectly distinguishable states, but ones which cause mistake in decoding with a non-zero probability. To minimize this error is the subject of \textit{minimum error state discrimination}.

A natural way to reduce error is to send the message several times. It is known that the error vanishes exponentially fast in the number of repetitions. The aim of \textit{asymptotic state discrimination} is to determine this exponent. In the case of \textit{binary state discrimination}, i.e., when $r=2$ (so there are only two possible messages), the exponent is the well-known \emph{Chernoff divergence} of the states $\varrho$ and $\sigma$ (denoted by $C(\varrho,\sigma)$).

\subsection{Composite hypothesis testing and the aim of this thesis}

In realistic scenarios it is more natural to assume that the hypotheses are represented by sets of states. This is because Bob might only need to know whether the message falls into certain subsets. Say for example, that the message is one of the values $i$ of the set $\{1,2,3,4,5\}$, but Bob actually only needs to know if $i<4$ or not. Of course, if Alice also knew what Bob needs, then she could just use a smaller alphabet, but we assume that she is unaware of the needs of Bob (or equivalently, that the need of Bob only becomes clear after the message was sent).  This is called \textit{composite hypotheses testing}.

In the \textit{single shot case} (i.e., when the message is not repeated) the above composite hypothesis testing can be reduced to a non-composite one. Indeed, instead of saying that depending on the value of $i\in \{1,2,3,4,5\}$
Alice puts the system in state $\rho_i$ and Bob only needs to decide whether $i<4$ or not, we could just say that Alice either sends a certain convex combination $\tilde{\rho}$ of  $\rho_1,\rho_2,\rho_3$ \textit{or} another certain combination $\tilde{\sigma}$ of $\rho_4,\rho_5$. However, in the repeated case, 
the communicational situation we considered -- namely, that Alice does not know what Bob exactly needs -- leads to the mathematical problem of distinguishing a certain convex combination of $\rho_1^{\otimes n},\rho_2^{\otimes n},\rho_3^{\otimes n}$
from another certain combination of $\rho_4^{\otimes n},\rho_5^{\otimes n}$
rather than distinguishing $\tilde{\rho}^{\otimes n}$ 
and $\tilde{\sigma}^{\otimes n}$ from each other. 
Thus, the computation of the asymptotic error exponent in the composite case cannot be simply reduced to a computation regarding the non-composite one.

It is, however, easily seen that also in this case the error vanishes exponentially fast with the number of repetitions, but the exponent is unknown even in the binary case. In fact, already in the simplest possible setup, namely, when the first set consists only one state $\{\varrho\}$ and the second consists only two states $\{\sigma_1,\sigma_2\}$, the exponent is unknown.
There is a conjecture \cite{Milan} that the exponent in this case is the minimum of the Chernoff divergences (i.e., $\min_i C(\varrho,\sigma_i)$) of the pairs (``worst case exponent''), that is the composite error decreases as the slower decreasing of the two pairs.

In some special cases, analytical proofs show that the conjecture is true there. Some numerical computations done in low dimensional cases also suggest that the conjecture is true. However, all of the known special cases are in some sense ``nice'', e.g., having certain symmetries. This may make the statement of the conjecture true for them, for example, because of some underlying group structure. Numerical computations can also be misleading: the deviation from the conjectured value could be very small, and to get high precision asymptotic rates, one needs to consider high tensorial powers, which means, that even if we start in low dimensions, one needs computations with extremely large matrices. Therefore, it could happen that in a less ``nice'' case the conjecture fails. This was precisely what happened in the well-known question of ``\textit{Superadditivity of communication capacity using entangled inputs}'' \cite{Hastings2009}, where a certain  natural conjecture was made, and for a long time many great researchers were attempting to prove it in vain. Numerical searches did not find counterexamples, and in all special cases that could be easily considered, the conjecture was true. However, in the end, the conjecture was shown to be false by a random counterexample. So the reason people could not find a counterexample was that the cases that are easy to deal with are just ``too nice''.   

Because of the above, one is motivated to find cases which on the one hand are as ``asymmetrical'' as possible (and, of course, do not follow from any of the known cases), yet still analytically computable. Following a suggestion of my supervisor, I will consider such a new case. As it turns out in my diploma thesis, the conjecture is also true in this new special case, which increases the likelihood that the conjecture is true in general.

In my thesis, during the introduction of the above topics (until the last subsection, which is my work) I omit the proofs of the known theorems and propositions, because it is not crucial for getting the essence of my thesis. However, if the curious reader interested in the proofs, they can find them in \cite{QInotes, Aud1}.

\subsection{Notations and basic concepts}\label{sec:Not}

The aim of this section is only to fix the basic concepts and clarify the notation. It assumes the basic knowledge of the subject, and it is not intended to provide an introduction to it.
(For a complete study of the subject, see \cite{QInotes}.)

\subsubsection{Quantum states and measurements}
We will only deal with finite dimensional quantum systems, where by the \textit{dimension} of the system we mean the dimension of the Hilbert space. Let $\mathcal{B}(\mathcal{H})$ denote the linear operators on a finite-dimensional Hilbert space $\mathcal{H}$. The set of self-adjoint operators $\mathcal{B}(\mathcal{H})_{\textrm{sa}}$ forms a real subspace in $\mathcal{B}(\mathcal{H})$. The set of positive semidefinite (PSD) operators $\mathcal{B}(\mathcal{H})_+$ forms a convex cone in $\mathcal{B}(\mathcal{H})_{\textrm{sa}}$. We use the notation $\varrho \geq0$ to express that $\varrho$ is a PSD operator. We define the \textit{quantum state space} as
\[
\mathcal{S}(\mathcal{H}):=\{\varrho \in \mathcal{B}(\mathcal{H})\ :\ \varrho \geq0, \ \Tr\varrho=1 \},
\]
that is the intersection of the hyperplane (of the trace one operators) and the convex cone $\mathcal{B}(\mathcal{H})_+$, so it is a convex set. The elements $\varrho\in\mathcal{S}(\mathcal{H})$  are called \textit{quantum state}s or \textit{density operator}s.
We use Dirac's great bra–ket notation $\ket{\psi}\bra{\psi}$ (where $\psi\in\mathcal{H}, \norm{\psi}=1$) to denote the rank one orthogonal projection to the one dimensional subspace $\mathbb{C}\psi$. These are the \textit{pure state}s and they are precisely the extreme points of the state space $\mathcal{S}(\mathcal{H})$. A \textit{nontrivial} (i.e., $\dim \mathcal{H} \geq 2$) quantum system have continuum many pure states. Since a density operator $\varrho$ is PSD, it is in particular also self-adjoint, and hence, normal, and therefore it can be expanded as 
\[
\varrho=\sum_{i=1}^d \lambda_i \ket{e_i}\bra{e_i},
\]
where $\lambda_1,...,\lambda_d$ are the eigenvalues of $\varrho$ and the eigenvectors $e_1,...,e_d$ form an ONB in $\mathcal{H}$ (this is what we call \textit{eigendecomposition} or \textit{spectral decomposition}). From the two condition $\Tr\varrho=1$ and $\varrho \geq0$, it follows that the eigenvalues form a probability distribution on $d=\dim \mathcal{H}$ points.

We consider $\mathcal{B}(\mathcal{H})$ as an inner product space with the \textit{Hilbert-Schmidt inner product} defined as follows,
\[
\braket{A,B}_{\textrm{HS}}:=\Tr A^* B.
\]
We say that $A,B\in\mathcal{B}(\mathcal{H})$ are \textit{orthogonal}, and denote by $A\perp B$, if \[
\braket{A,B}_{\textrm{HS}}=0.
\]
The following Proposition shows several equivalent characterizations of the orthogonality of operators.
\begin{proposition}\label{prop:ort}
Let $A,B\in\mathcal{B}(\mathcal{H})_+$, then the following are equivalent:
\begin{enumerate}[label=(\roman*)]
\item $A\perp B$, i.e., $\braket{A,B}_{\textrm{HS}}=0$,
\item $\Tr A B =0$,
\item $ A B =0$,
\item $\Image A\perp \Image B$,
\item $A^0 B^0=0$,
\end{enumerate}
where $A^0$ denotes the orthogonal projection to $\Image A$.
\end{proposition}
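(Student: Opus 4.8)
The plan is to first dispatch the trivial equivalence (i) $\iff$ (ii) and then arrange the remaining statements into a single cycle of implications (ii) $\Rightarrow$ (iii) $\Rightarrow$ (iv) $\Rightarrow$ (v) $\Rightarrow$ (ii). Since $A\in\mathcal{B}(\mathcal{H})_+$ is in particular self-adjoint, we have $A^*=A$, so $\braket{A,B}_{\textrm{HS}}=\Tr A^*B=\Tr AB$; thus (i) and (ii) are literally the same statement and require no work.

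The heart of the argument, and the step I expect to be the main obstacle, is (ii) $\Rightarrow$ (iii), because this is the only implication that genuinely uses positivity: for general operators $\Tr AB=0$ does not force $AB=0$. First I would invoke positive square roots: since $A,B\geq 0$ there are self-adjoint $A^{1/2},B^{1/2}\geq 0$ with $A=(A^{1/2})^2$ and $B=(B^{1/2})^2$. Setting $C:=A^{1/2}B^{1/2}$, so that $C^*=B^{1/2}A^{1/2}$, cyclicity of the trace gives
\[
\Tr AB=\Tr A^{1/2}A^{1/2}B^{1/2}B^{1/2}=\Tr B^{1/2}A^{1/2}A^{1/2}B^{1/2}=\Tr C^*C=\norm{C}_{\textrm{HS}}^2.
\]
Hence $\Tr AB=0$ forces $C=A^{1/2}B^{1/2}=0$, and then
\[
AB=A^{1/2}\,(A^{1/2}B^{1/2})\,B^{1/2}=A^{1/2}\,C\,B^{1/2}=0,
\]
which is (iii). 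The converse (iii) $\Rightarrow$ (ii) is immediate by taking traces, so this also re-confirms (i).

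For the remaining links I would argue geometrically, using only self-adjointness and the defining properties of the support projection $A^0$. For (iii) $\Rightarrow$ (iv): $AB=0$ means $\Image B\subseteq\ker A$, and since $A$ is self-adjoint $\ker A=(\Image A)^\perp$, so $\Image A\perp\Image B$. For (iv) $\Rightarrow$ (v): orthogonality of the ranges means $\Image B^0=\Image B\subseteq(\Image A)^\perp=\ker A^0$, whence $A^0B^0=0$. Finally, to close the cycle via (v) $\Rightarrow$ (iii), I would use that a positive operator fixes its own range, i.e. $A\,A^0=A$ and $B^0\,B=B$; then
\[
AB=(A\,A^0)(B^0\,B)=A\,(A^0B^0)\,B=0,
\]
which gives (iii) and therefore (ii). This completes the cycle and establishes the equivalence of all five conditions. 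The only place where any care is needed is the positivity-driven step (ii) $\Rightarrow$ (iii); every other implication is a routine consequence of self-adjointness and the range/kernel correspondence.
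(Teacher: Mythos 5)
Your proof is correct in every step. Note, however, that the paper itself never proves Proposition~\ref{prop:ort}: it is stated as a known background fact, and the author explicitly omits proofs of such known results, deferring to the cited references \cite{QInotes, Aud1}. So there is no in-paper argument to compare against; your proposal supplies a complete, self-contained proof of a statement the thesis takes for granted. Your structure is the standard one: (i)$\iff$(ii) is immediate from $A^*=A$; the only implication that genuinely needs positivity, (ii)$\Rightarrow$(iii), is handled correctly by writing $\Tr AB=\Tr C^*C=\norm{C}_{\textrm{HS}}^2$ with $C=A^{1/2}B^{1/2}$, so that the vanishing trace kills $C$ and hence $AB=A^{1/2}CB^{1/2}=0$; and the remaining links (iii)$\Rightarrow$(iv)$\Rightarrow$(v)$\Rightarrow$(iii) follow from self-adjointness via $\ker A=(\Image A)^\perp$ and the support-projection identities $AA^0=A$, $B^0B=B$, which you correctly justify. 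The cycle you close, together with (iii)$\Rightarrow$(ii) by taking traces, does establish the equivalence of all five conditions, and your identification of (ii)$\Rightarrow$(iii) as the only non-routine step is accurate.
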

We say that an operator $A\in\mathcal{B}(\mathcal{H})$ is an \textit{orthogonal projection}, if $A=A^2=A^*$. 
\begin{remark}
In the whole thesis, by \textit{orthogonal projection} we  mean this latter definition (even in the phrase ``$A,B$ are two orthogonal projections''). If we want to express that they are orthogonal to each other we use $A\perp B$ or say ``...operators which are orthogonal''.
\end{remark}

On a \textit{quantum measurement} we mean a \textit{positive operator valued measure} (POVM) on a Hilbert space $\mathcal{H}$ which is a set of positive operators $M_{x_1},...,M_{x_r}$, sum up to the identity,
\[
\sum_{x_i\in\mathcal{X}} M_{x_i}=I,
\]
where the indices denote the outcomes of the set $\mathcal{X}$. The operators $ M_{x}$ are called \textit{measurements operators} of the given POVM.
A POVM is called a \textit{projection valued measure} (PVM) if all $M_x$ are orthogonal projections (i.e., $M_x=M_x^2=M_x^*$). The POVM forms a convex set (with natural pointwise operations).

The \textit{generalized Born rule} gives the probability of the outcome $x$ in the state $\varrho$ for the measurement $M$,
\[
P_{\varrho,M}(x):=\Tr \varrho M_x,
\]
for any state and any measurement.

In the case of basic quantum mechanics, when we restrict ourselves to pure states (instead of density operators) and PVM (instead of POVM), that is, we have the state $\varrho=\ket{\psi}\bra{\psi}$ and the PVM $\{\ket{e_1}\bra{e_1},...,\ket{e_r}\bra{e_r}\}$, the Born rule returns to the well-known formula
\[
P_{\varrho,M}(i)=\Tr \Big(\ket{\psi}\bra{\psi}\ket{e_i}\bra{e_i}\Big)=\abs{\braket{e_i,\psi}}^2,
\]
where we used the cyclic property of the Tr.

\subsubsection{Classical systems in quantum formalism}
Let $\mathcal{H}$ be a finite-dimensional Hilbert space and fix an orthonormal basis $\beta: e_1,...,e_r$ in $\mathcal{H}$. Let $\mathcal{B}(\mathcal{H})^{\beta}$ denote the linear operators on $\mathcal{H}$ which have diagonal matrix in the $\beta$ basis. That is, for any $A\in \mathcal{B}(\mathcal{H})^{\beta}$,
\[
A=\sum_{\omega\,\in\,[r]} A(\omega) \ket{e_\omega}\bra{e_\omega},
\]
where $A(\omega)$ denotes the ith diagonal element (so the ith eigenvalue). We denote by $\mathcal{B}(\mathcal{H})_{\textrm{sa}}^{\beta}$ and $\mathcal{B}(\mathcal{H})_+^{\beta}$ the set of self-adjoint operators and the set of positive semidefinite operators in $\mathcal{B}(\mathcal{H})^{\beta}$, respectively. We define the \textit{classical state space} as
\[
\mathcal{S}(\mathcal{H})^{\beta}:=\{\varrho \in \mathcal{B}(\mathcal{H})^{\beta}\ :\ \varrho \geq0, \ \Tr\varrho=1 \}.
\]
The elements $\varrho\in\mathcal{S}(\mathcal{H})^{\beta}$  are called \textit{classical state}s. We define the \textit{pure state}s in the same way as in the quantum case. The big difference is that a classical system can only have finite ($r$) many pure states. The classical state space forms an (r-1)-simplex.

By a \textit{classical measurement} we mean a POVM where all the measurement operators $M_{x_1},...,M_{x_r}$ are diagonal in the $\beta$ basis.

\begin{remark}
It is easy to show that, if we have a classical system than it is enough to consider only the classical measurements. We can not get any new probability distribution on the outcomes, if we consider the non-diagonal ones, so quantum measurements do not offer any benefit in the classical case.
\end{remark}

\newpage
\subsubsection{Asymptotic notation}

Let $f$ and $g$ be real valued functions defined on some unbounded subset of the positive real numbers, and $g(n)$ be strictly positive for all large enough values of $n$. We define the following notations in the table (where the limit definitions assume $g(n)\neq 0$ for sufficiently large $n$).\\

\setlength{\tabcolsep}{10pt}
\begin{table}[H]
\begin{adjustwidth}{-2cm}{}
\begin{tabular}{l|l|l|l|l}
 & Notation & Name & Formal definition & Limit definition  \\[5pt]\hline
 \rule{0pt}{5ex}    $<$ & $f=o(g)$ & Small O &  $\forall k>0\ \ \exists n_0\ \ \forall n>n_0:\  $ & $\displaystyle \lim_{n \to \infty} \frac{\abs{f(n)}}{g(n)}=0$ \\
   &   &   & $\abs{f(n)}<k g(n)$ \\[15pt]
$\leq$ & $f=O(g)$ & Big O & $\exists k>0\ \ \exists n_0\ \ \forall n>n_0: $ & $\displaystyle \limsup_{n \to \infty}\frac{\abs{f(n)}}{g(n)}<\infty$ \\ 
   &   &   & $\abs{f(n)}\leq kg(n)$ \\[15pt]
$\approx$ & $f=\Theta(g)$  & Big Theta & $\exists k_1>0\ \ \exists k_2>0\ \ \exists n_0\ \ \forall n>n_0:$ & $f(n)=O(g(n))\ $ and    \\ 
  &   &   & $k_1 g(n)\leq f(n) \leq k_2 g(n)$ & $f(n)=\Omega(g(n))$ \\[15pt]
$\sim$ & $f\sim g$ & On the order of & $\forall \epsilon >0\ \ \exists n_0\ \ \forall n>n_0: $ & $\displaystyle \lim_{n \to \infty}\frac{f(n)}{g(n)}=1 $ \\ 
   &   &   & $\abs{\frac{f(n)}{g(n)}-1}<\epsilon$ \\[15pt]
$\geq$ & $f=\Omega(g)$ & Big Omega & $\exists k>0\ \ \exists n_0\ \ \forall n>n_0: $ & $\displaystyle \liminf_{n \to \infty}\frac{f(n)}{g(n)}>0 $ \\
   &   &   & $f(n)\geq k g(n)$ \\[15pt]
$>$ & $f=\omega(g)$ & Small Omega & $\forall k>0\ \ \exists n_0\ \ \forall n>n_0: $ &  $\displaystyle \lim_{n \to \infty}\Big|\frac{f(n)}{g(n)}\Big|=\infty $   \\
   &   &   & $\abs{f(n)}> k \abs{g(n)}$ \\[5pt]
\end{tabular}
\end{adjustwidth}
\end{table}
In the first column there are some alternative notations, for example, instead of $f=o(g)$ we can use $f<g$, if it is clear from the context. In this work we will use frequently the notation $o$ and the notations in the first column $<,\leq,\approx,\sim,\geq,>$.
\begin{remark}\label{re:ordo}
\par
\needspace{4\baselineskip}
So far this is consistent with the ordinary notations. The only slight change what we will make is the following. In the proofs, with the above notation, we only want to express the order of the functions regardless to their signs. So, for example, when we write $f=o(g)$, it actually means that $\abs{f}=o(\abs{g})$.
\end{remark}

\newpage

\section{Single-shot state discrimination}\label{sec:Sin}
Consider the following communication scenario (what we have already introduced in the previous section). A sender (Alice) has $r$ given states $\varrho_1,...,\varrho_r$ and a probability distribution $p_1,...,p_r$, which are known for Bob (both the states and the probability distribution). Alice randomly chooses a state (with respect to the probability distribution) and send it to Bob through an ideal quantum channel (i.e., Bob gets exactly the same state what Alice sent). The task for Bob is to find out which message was sent. For this, Bob performs a measurement on the system, described by a POVM. The point is that Bob has the prior knowledge that the message can only be one of the $r$ given states $\varrho_1,...,\varrho_r$ with the probability $p_1,...,p_r$, and the only freedom is in choosing his measurement. This problem is called the \emph{single-shot state discrimination} (where the single-shot refers to that the state has been sent only once, unlike in the asymptotic state discrimination, where the state has been sent n times). The situation can be visualised as follows.

\begin{table}[H]
\begin{tabular}{clll}
\multicolumn{1}{l}{} & \multicolumn{1}{l}{\textit{Alice}} & \multicolumn{1}{c}{$\varrho_i \ \longrightarrow \ \varrho_i$} & \textit{Bob}    \\[10pt]
$p_1 \quad$                   & 1: $\varrho_1$                     &                                                           & 1: $M_1$     \\[10pt]
$p_2 \quad $               & 2: $\varrho_2$                     &                                                           & 2: $M_2$     \\[10pt]
               &   $\vdots$                     &                                                           &   $\vdots$     \\[10pt]
$p_r \quad $               & $r$: $\varrho_r$                     &                                                           & $r$: $M_r$     \\[10pt]
\end{tabular}
\end{table}

If Bob obtains the outcome $j$, he concludes that Alice has sent the message $j$.
By the Born rule, the probability of that he concludes the message being $j$ when Alice sent $i$ is
\[
P(j|i)=\Tr M_j \varrho_i.
\]
Thus, the probability of successfully identifying the message $i$ is
\[
P(i|i)=\Tr M_i \varrho_i,
\]
the \textit{i-th success probability}, while the probability of an incorrect decoding is
\[
1-P(i|i)=\Tr(I-M_i) \varrho_i,
\]
the \textit{i-th error probability}.
\begin{remark}
A $d$-dimensional classical system has $d$ different pure states, while any nontrivial (i.e., $d\geq2$ dimensional) quantum system has continuum many pure states. In particular, a classical bit has 2 different pure states 0 and 1, while a qubit has continuum many (every surface-point of the Bloch sphere). Thus specifying the physical state of a classical bit requires 1 bit of information, on the contrary, specifying the physical state of a qubit requires an infinite sequence of 0-1. This might suggest that it may be possible to store an infinite amount of information in a single qubit. Of course, it is not possible in practice, since preparing a qubit exactly in the state and reading it out would require infinite precision. In the following we will see that it is not even possible theoretically to store more information in a finite-dimensional quantum system than in a classical one (of the same dimension).
\end{remark}

\subsection{Perfect state discrimination}
In this subsection we examine when it is possible for Bob to perform such a measurement that gives the correct answer with certainty.

\begin{definition}
We say that a POVM $\{M_1,...,M_r\}$ \textit{perfectly distinguishes} the states $\varrho_1,...,\varrho_r \in \mathcal{S}(\mathcal{H}) $ if
\[
\Tr \varrho_i M_i =1, \qquad \forall i\in[r],
\]
i.e., the probability of correctly identifying the state of the system is 1 for every possible state. We say that some states $\varrho_1,...,\varrho_r \in \mathcal{S}(\mathcal{H}) $ are \textit{perfectly distinguishable}, if there exists a measurement perfectly distinguishing them.
\end{definition}
 
\begin{proposition}\label{prop:perfect}
A set of states $\varrho_1,...,\varrho_r \in \mathcal{S}(\mathcal{H}) $ are perfectly distinguishable, if and only if they are pairwise orthogonal, i.e.,
\[
\varrho_i \perp \varrho_j, \qquad i\neq j.
\]
\end{proposition}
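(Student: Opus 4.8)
The plan is to prove the two implications separately, using Proposition~\ref{prop:ort} throughout as the main tool for passing between the equivalent forms of orthogonality for positive operators (vanishing trace, vanishing product, orthogonality of ranges, and orthogonality of support projections $A^0$). The only other ingredients needed are the elementary facts that a POVM satisfies $0\le M_j\le I$ and that the trace of a product of two PSD operators is nonnegative.

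For the ``if'' direction, assume $\varrho_i\perp\varrho_j$ for $i\neq j$ and build an explicit distinguishing POVM from the support projections. By Proposition~\ref{prop:ort} the condition $\varrho_i\perp\varrho_j$ is equivalent to $\varrho_i^0\varrho_j^0=0$, so $\varrho_1^0,\dots,\varrho_r^0$ are pairwise orthogonal projections and $\sum_{i<r}\varrho_i^0$ is again a projection, hence $\leq I$. I would then set $M_i:=\varrho_i^0$ for $i<r$ and $M_r:=I-\sum_{i<r}\varrho_i^0$; each term is PSD and they sum to $I$, so this is a POVM. To check perfect distinguishing, note that $\varrho_i\varrho_i^0=\varrho_i$ gives $\Tr\varrho_iM_i=\Tr\varrho_i=1$ for $i<r$, while for $i=r$ one computes $\Tr\varrho_rM_r=\Tr\varrho_r-\sum_{i<r}\Tr\varrho_r\varrho_i^0=1$, each cross term vanishing since $\Tr\varrho_r\varrho_i^0=\Tr\varrho_r\varrho_r^0\varrho_i^0=0$ by pairwise orthogonality.

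For the ``only if'' direction, assume a POVM $\{M_i\}$ with $\Tr\varrho_iM_i=1$ for all $i$. The first step is a trace-sum argument: since $\sum_jM_j=I$ we have $\sum_j\Tr\varrho_iM_j=\Tr\varrho_i=1$, and every summand is nonnegative, so $\Tr\varrho_iM_i=1$ forces $\Tr\varrho_iM_j=0$ for all $j\neq i$. By Proposition~\ref{prop:ort} this yields $\Image\varrho_i\perp\Image M_j$ for $i\neq j$. The second step upgrades orthogonality against $M_j$ to orthogonality against $\varrho_j$: from $\Tr\varrho_jM_j=1=\Tr\varrho_j$ we get $\Tr\varrho_j(I-M_j)=0$ with both factors PSD, so $\varrho_j(I-M_j)=0$, i.e. $\varrho_j=M_j\varrho_j$ after taking adjoints, whence $\Image\varrho_j\subseteq\Image M_j$. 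Combining, $\Image\varrho_i\perp\Image M_j\supseteq\Image\varrho_j$ gives $\varrho_i\perp\varrho_j$.

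The hard part will be exactly this last step of the reverse direction: the hypothesis only directly constrains the interaction between each state $\varrho_i$ and the measurement operators $M_j$, not between the states themselves. The crucial observation needed to bridge this gap is the inclusion $\Image\varrho_j\subseteq\Image M_j$, which one must extract from the perfect-success condition $\Tr\varrho_jM_j=1$ together with $M_j\le I$. Once this is in place, everything else is routine bookkeeping with the equivalences of Proposition~\ref{prop:ort}.
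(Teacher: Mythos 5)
Your proof is correct, and it is essentially the standard argument for this fact; note that the thesis itself omits the proof of Proposition~\ref{prop:perfect} (deferring to the cited references), so there is no in-paper proof to compare against. Both directions are sound: the support-projection POVM $\{\varrho_1^0,\dots,\varrho_{r-1}^0,\,I-\sum_{i<r}\varrho_i^0\}$ works for sufficiency, and for necessity your two-step reduction --- first $\Tr\varrho_i M_j=0$ for $i\neq j$ via nonnegativity of the summands in $\sum_j\Tr\varrho_i M_j=1$, then $\Image\varrho_j\subseteq\Image M_j$ from $\Tr\varrho_j(I-M_j)=0$ with $I-M_j\geq 0$ --- correctly bridges the gap between state--measurement orthogonality and state--state orthogonality, with each passage justified by Proposition~\ref{prop:ort}.
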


\begin{remark}
For the definition and equivalent characterizations of orthogonality (in Proposition \ref{prop:ort}) see Subsection \ref{sec:Not}.
\end{remark}

\begin{corollary}
A set of pure states $\ket{\psi_1}\bra{\psi_1},...,\ket{\psi_r}\bra{\psi_r} \in \mathcal{S}(\mathcal{H})  $ are perfectly distinguishable, if and only if $\psi_i \perp \psi_j,\ i\neq j$.
\end{corollary}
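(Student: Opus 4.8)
The plan is to obtain the corollary as an immediate specialization of Proposition~\ref{prop:perfect}. Since every pure state $\ket{\psi_i}\bra{\psi_i}$ is a positive operator (in fact a rank-one orthogonal projection), Proposition~\ref{prop:perfect} applies directly and reduces the claim to the following purely operator-theoretic equivalence: for rank-one projections, the operator orthogonality $\ket{\psi_i}\bra{\psi_i}\perp\ket{\psi_j}\bra{\psi_j}$ holds if and only if the underlying vectors satisfy $\psi_i\perp\psi_j$. So the entire content of the corollary is translating orthogonality at the level of operators into orthogonality at the level of vectors.

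First I would establish this equivalence by a direct Hilbert--Schmidt computation. Using that each projection is self-adjoint together with the cyclic property of the trace,
\[
\braket{\ket{\psi_i}\bra{\psi_i},\ket{\psi_j}\bra{\psi_j}}_{\textrm{HS}}=\Tr\big(\ket{\psi_i}\bra{\psi_i}\ket{\psi_j}\bra{\psi_j}\big)=\abs{\braket{\psi_i,\psi_j}}^{2}.
\]
Hence the Hilbert--Schmidt inner product of the two projections vanishes precisely when $\braket{\psi_i,\psi_j}=0$, i.e.\ when $\psi_i\perp\psi_j$. Alternatively, one may avoid the computation entirely by invoking characterization~(iv) of Proposition~\ref{prop:ort}: operator orthogonality is equivalent to $\Image A\perp\Image B$, and since $\Image\ket{\psi_i}\bra{\psi_i}=\C\psi_i$ is one-dimensional, this is exactly $\C\psi_i\perp\C\psi_j$, which is $\psi_i\perp\psi_j$.

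Combining either route with Proposition~\ref{prop:perfect} then finishes the argument: the states are perfectly distinguishable if and only if they are pairwise operator-orthogonal, which by the above holds if and only if $\psi_i\perp\psi_j$ for all $i\neq j$. There is no genuine obstacle here; the only point requiring (minimal) care is the trace evaluation above, where one must keep track of which inner product appears, together with the observation that for rank-one operators orthogonality of images collapses to orthogonality of the generating vectors.
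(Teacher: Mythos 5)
Your proof is correct and follows exactly the route the paper intends: the corollary is stated as an immediate specialization of Proposition~\ref{prop:perfect}, the only content being the identification of operator orthogonality of rank-one projections with orthogonality of the underlying vectors (a verification the paper itself omits). Both your Hilbert--Schmidt trace computation and the alternative via characterization (iv) of Proposition~\ref{prop:ort} are valid ways to supply that step.
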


\begin{corollary}
The maximum number of perfectly distinguishable states of
a system is the dimension of the Hilbert space of the system. And any such maximal set of perfectly distinguishable states consists of pure states corresponding to an orthonormal basis of $\mathcal{H}$ such that the states are the orthogonal projections to the basis vectors. And the measurement achieving perfect discrimination is uniquely given as the orthogonal projections to the basis vectors.
\end{corollary}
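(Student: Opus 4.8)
The plan is to reduce everything to the two earlier results: Proposition~\ref{prop:perfect}, by which perfect distinguishability of $\varrho_1,\dots,\varrho_r$ is equivalent to their pairwise orthogonality, and Proposition~\ref{prop:ort}, whose item~(iv) rephrases $\varrho_i\perp\varrho_j$ as $\Image\varrho_i\perp\Image\varrho_j$. Once the hypothesis is recast in terms of mutually orthogonal ranges, the maximum-number claim becomes a dimension count, the structural claim follows from forcing that count to be tight, and the uniqueness of the measurement is a short separate argument.

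First I would prove the bound together with the extremal structure. Assume $\varrho_1,\dots,\varrho_r\in\mathcal{S}(\mathcal{H})$ are perfectly distinguishable. By the two cited propositions the ranges $\Image\varrho_1,\dots,\Image\varrho_r$ are pairwise orthogonal subspaces of $\mathcal{H}$; each $\varrho_i$ is a state and hence nonzero, so $\dim\Image\varrho_i\ge 1$. As mutually orthogonal subspaces have dimensions that add, $r\le\sum_i\dim\Image\varrho_i\le\dim\mathcal{H}=d$, which gives the upper bound. It is attained: for any ONB $e_1,\dots,e_d$ the pure states $\ket{e_i}\bra{e_i}$ satisfy $\Tr\big(\ket{e_i}\bra{e_i}\,\ket{e_j}\bra{e_j}\big)=\abs{\braket{e_i,e_j}}^2=0$ for $i\ne j$, so they are pairwise orthogonal and thus perfectly distinguishable by Proposition~\ref{prop:perfect}. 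When $r=d$ the chain of inequalities is forced to be an equality, so $\dim\Image\varrho_i=1$ for every $i$; a positive operator of rank one and trace one is exactly a rank-one projection, i.e.\ a pure state $\ket{\psi_i}\bra{\psi_i}$, and orthogonality of the now one-dimensional ranges says precisely that $\psi_1,\dots,\psi_d$ are orthonormal, hence an ONB.

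Next I would pin down the measurement. Let $\{M_1,\dots,M_d\}$ be any POVM with $\Tr\varrho_iM_i=1$ for $\varrho_i=\ket{\psi_i}\bra{\psi_i}$, i.e.\ $\bra{\psi_i}M_i\ket{\psi_i}=1$. The one fact I need is elementary: for a positive operator $P$ and a vector $v$, $\bra{v}P\ket{v}=\norm{P^{1/2}v}^2$, so $\bra{v}P\ket{v}=0$ iff $Pv=0$. From $\sum_jM_j=I$ and $M_j\ge 0$ we get $0\le M_i\le I$; applying the fact to $P=I-M_i\ge 0$ and $v=\psi_i$ turns $\bra{\psi_i}(I-M_i)\ket{\psi_i}=0$ into $M_i\psi_i=\psi_i$. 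For $j\ne i$, expanding $1=\bra{\psi_j}I\ket{\psi_j}=\sum_k\bra{\psi_j}M_k\ket{\psi_j}$ and using $\bra{\psi_j}M_j\ket{\psi_j}=1$ with all summands nonnegative forces $\bra{\psi_j}M_i\ket{\psi_j}=0$, hence $M_i\psi_j=0$ by the same fact. Thus $M_i$ fixes $\psi_i$ and kills every other basis vector, so it is determined on the ONB and equals $\ket{\psi_i}\bra{\psi_i}$, the claimed unique measurement.

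The dimension count and the achievability example are immediate once the orthogonality characterization of Proposition~\ref{prop:ort} is invoked, so the only place that takes a little care is the uniqueness step. The hard part there is organizational rather than deep: one must use the normalization $\sum_jM_j=I$ in two distinct ways -- once to obtain $M_i\le I$ and once to split the total probability at each $\psi_j$ -- and then convert the resulting vanishing expectation values into the exact vector identities $M_i\psi_i=\psi_i$ and $M_i\psi_j=0$ via the positive-operator/kernel fact. That is the main (if modest) obstacle I anticipate.
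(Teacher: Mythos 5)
Your proof is correct and follows exactly the route the paper intends for this corollary: reducing perfect distinguishability to pairwise orthogonal ranges via Proposition~\ref{prop:perfect} and Proposition~\ref{prop:ort}(iv), counting dimensions for the bound and the rank-one structure, and extracting $M_i\psi_i=\psi_i$, $M_i\psi_j=0$ from positivity and normalization to pin down the measurement uniquely. The paper itself omits the proof (deferring to the cited lecture notes), but your argument is the standard one and has no gaps.
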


As a consequence of Corollary, the maximum number of perfectly distinguishable states in a quantum system is the same as in the classical system (with the same dimension). There is, however, a difference between classical and quantum systems in that in a classical system there is one unique set of perfectly distinguishable states, while in a quantum system there are continuum many. 

\subsection{Minimum error state discrimination}

Ideally, Alice and Bob should use perfectly distinguishable states (as we have seen above, they are precisely the states which are orthogonal). However, the state set by Alice may be corrupted ``on the way'', e.g., by some noise in the channel of communication. Here, instead of considering noisy channels, we shall formally regard our channel a perfect one (i.e., it is the identity map) but instead restrict Alice from using arbitrary states for encoding. To put it another way, we shall simply ``prescribe'' to Alice which states she must use. So it makes sense to consider the case when the states are not perfectly distinguishable (that is, not orthogonal), therefore mistake occurs in the decoding inevitably. The aim of \textit{minimum error state discrimination} is to minimize this error, or equivalently, maximize the success. For this we have to define what we call success. A natural way to do that is simply to sum up the $i$-th success probability with the $p_i$ weights (called \textit{prior probabilities}). (We can think $p_i$ as it shows how often Alice sends the message $\varrho_i$ or as the weigh showing how important the $\varrho_i$ message is.)

\begin{align}
   & P_s(\{M_1,...,M_r\}):=\sum_{i=1}^r p_i \Tr \varrho_i M_i  \label{eq:p_s},\\  
   & P_e(\{M_1,...,M_r\}):=\sum_{i=1}^r p_i \Tr \varrho_i (I-M_i)  \label{eq:p_e},
\end{align}
and obviously
\[
P_s(\{M_1,...,M_r\})+P_e(\{M_1,...,M_r\})=1,
\]
so it is enough to deal with one of them. $P_s(\{M_1,...,M_r\})$ is called the \emph{Bayesian success probability}. Which is a function of the $M:=\{M_1,...,M_r\}$ measurement $\povm$. The aim of minimum error state discrimination is to maximize the success probability,

\[
P_s^*(p_1\varrho_1,...,p_r\varrho_r):=\max_{\{M_1,...,M_r\} \povm} \sum_{i=1}^r p_i \Tr \varrho_i M_i.
\]
This is called the \textit{optimal success probability}. We say that a measurement $M$ is \emph{optimal} if it attains the optimal success probability, i.e., $P_s(M)=P_s^*$.

In the following theorems we almost never use the fact that the operators $\varrho_i$ are density operators and the real numbers $p_i$ form a probability distribution. Thus we will replace the operators $p_i \varrho_i$ with arbitrary PSD operators $A_i$  (sometimes called \textit{generalized state}) and consider the following \textit{generalized} problem
\[
P_s^*(A_1,...,A_r):=\max_{\{M_1,...,M_r\} \povm} \sum_{i=1}^r  \Tr A_i M_i,
\]
where $A_1,...,A_r$ are given PSD operators. The above introduced quantities (e.g., the optimal success probability) can be straightforward generalized to this case (by replacing $p_i \varrho_i$ with $A_i$), and we will use the same notation for them.

This generalization has two advantages, on the one hand it simplifies the notation, on the other hand, it gives more general results. 

\begin{remark}
There exists an optimal measurement, since the POVM forms a compact convex set, and $M\mapsto P_s(M)$ is affine (thus continuous). Moreover, there exist an optimal measurement which is extremal.
\end{remark}
\subsubsection{Classical case}
First, we consider the classical case (that is, $A_i \in  \mathcal{B}(\mathcal{H})_{+}^\beta$) for some intuition. In this case we can give explicitly the optimal success probability and an optimal measurement. We illustrate it in the following example.
\begin{example}
Consider the following three generalized classical states (in $\mathcal{B}(\mathcal{H})_{+}^\beta$),
\[
A_1=\begin{pmatrix}
3 & &  \\
  &1 &  \\
 & & 2 \\
\end{pmatrix},\quad
A_2=\begin{pmatrix}
1 & &  \\
  &2 &  \\
 & & 2 \\
\end{pmatrix},\quad
A_3=\begin{pmatrix}
2 & &  \\
  &0 &  \\
 & & 1 \\
\end{pmatrix},\quad
\]
then
\[
M_1=\begin{pmatrix}
1 & &  \\
  &0 &  \\
 & & \alpha \\
\end{pmatrix},\quad
M_2=\begin{pmatrix}
0 & &  \\
  &1 &  \\
 & & 1-\alpha \\
\end{pmatrix},\quad
M_3=\begin{pmatrix}
0 & &  \\
  &0 &  \\
 & & 0 \\
\end{pmatrix},\qquad  \alpha\in[0,1],
\]
maximise the $\sum_{i=1}^r  \Tr A_i M_i$ success. That is, roughly speaking, in each ``row'' of the measurement operators we have to put 1 where it is the largest element
 in the same ``row'' of the states. If there are several largest ones, then we can arbitrarily distribute the weights between them. We can see this if we group the sum row-wise, and use the fact that each ``row'' of the POVM sum up to 1. This kind of measurements are called \textit{maximum likelihood measurement}s, and the maximum can be calculated as
\begin{align*}
\sum_{i=1}^3  \Tr A_i M_i 
&=\Tr\begin{pmatrix}
3 & &  \\
  & 2 &  \\
 & & 2  \\
\end{pmatrix} \\
&=\Tr 
\begin{pmatrix}
\max_i\{A_i(1)\} & &  \\
  & \max_i\{A_i(2)\} &  \\
 & & \max_i\{A_i(3)\}  \\
\end{pmatrix} 
=\Tr\, \sup \{A_1,A_2,A_3\},
\end{align*}
where regarding the supremum we consider $\mathcal{B}(\mathcal{H})_{\sa}^\beta$ as a partially ordered set respect to the PSD ordering (it will be explained below in more detail).

We can observe that although the maximising measurement is not unique (freedom of $\alpha$) the $\sup \{A_1,A_2,A_3\}$ is a unique matrix and so the trace of it (which is $7$ for all $\alpha\in[0,1]$). Furthermore if we set $\alpha=0$ or $1$ we can observe that there exist a maximising measurement which is extremal.
\end{example}
After the example we give the formal definition of maximum likelihood measurement.
\begin{definition}
We say that a measurement $\{M_1,...,M_r \}$ is a \textit{maximum likelihood measurement} if for all $\omega$,
\[
A_i(\omega)< \max_j A_j(\omega) \quad \Rightarrow \quad M_i(\omega)=0.
\]
\end{definition}
That is, on any elementary event $\omega$, the measurement only outputs such $i$ values where $A_i(\omega)$ is maximal. 

As we saw, the important feature of the classical case is that we can give explicitly the optimal success probability and an optimal measurement (like in the example).
\begin{proposition}\label{prop:C}
For $A_1,...,A_r\in \mathcal{B}(\mathcal{H})_{+}^\beta$ (i.e., classical case) the optimal success probability is
\begin{equation}\label{eq:Cps}
P_s^*(A_1,...,A_r)=\Tr\, \sup \{A_1,...,A_r\},
\end{equation}
and a measurement M is optimal if and only if it is a maximum likelihood measurement.
\end{proposition}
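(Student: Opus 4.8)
The plan is to reduce everything to one elementary inequality by introducing $S:=\sup\{A_1,\ldots,A_r\}$, the supremum taken in the lattice $\mathcal{B}(\mathcal{H})_{\sa}^{\beta}$; since all the $A_i$ are diagonal in $\beta$, the PSD order restricts here to the componentwise order on diagonal entries, so $S$ is the diagonal operator with $S(\omega)=\max_i A_i(\omega)$. I would prove both assertions at once -- the value of $P_s^*$ and the optimality criterion -- by building a chain of equivalences whose engine is the fact that $\Tr XY\ge 0$ for $X,Y\in\mathcal{B}(\mathcal{H})_+$, sharpened by the equality case $\Tr XY=0\iff XY=0$ from Proposition \ref{prop:ort}.

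First I would establish the upper bound. For each $i$ we have $A_i\le S$, so $S-A_i\in\mathcal{B}(\mathcal{H})_+$; as $M_i$ is also PSD, $\Tr\bigl((S-A_i)M_i\bigr)\ge 0$, i.e. $\Tr A_iM_i\le\Tr SM_i$. Summing and using $\sum_i M_i=I$ yields
\[
\sum_{i=1}^r \Tr A_iM_i\ \le\ \sum_{i=1}^r\Tr SM_i=\Tr\Bigl(S\sum_{i=1}^r M_i\Bigr)=\Tr S .
\]
I stress that this uses no diagonality of the $M_i$, so the bound holds against every POVM. For achievability I would pick for each $\omega$ an index $j(\omega)$ with $A_{j(\omega)}(\omega)=\max_i A_i(\omega)$ and set $(M_i)_{\omega\omega}=\delta_{i,j(\omega)}$ with vanishing off-diagonal entries; this is a genuine POVM and its success equals $\sum_\omega \max_i A_i(\omega)=\Tr S$. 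Hence $P_s^*(A_1,\ldots,A_r)=\Tr S$, attained by a maximum likelihood measurement.

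For the characterization I would track the equality case. A measurement is optimal exactly when equality holds above, i.e. when $\Tr\bigl((S-A_i)M_i\bigr)=0$ for every $i$; since $S-A_i$ and $M_i$ are both PSD, Proposition \ref{prop:ort} converts this to $(S-A_i)M_i=0$ for every $i$. Writing $D_i:=S-A_i$, a nonnegative diagonal operator, the $(\omega,\omega')$ entry of $D_iM_i$ is $D_i(\omega)\,(M_i)_{\omega\omega'}$, so $D_iM_i=0$ means precisely that $D_i(\omega)>0$ forces the whole $\omega$-th row (and, by self-adjointness, column) of $M_i$ to vanish. In particular $(M_i)_{\omega\omega}=0$; and since $D_i(\omega)>0\iff A_i(\omega)<\max_j A_j(\omega)$, optimality implies the maximum likelihood condition.

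The hard part will be the converse, because the definition of a maximum likelihood measurement controls only the diagonal entries $(M_i)_{\omega\omega}$, whereas $D_iM_i=0$ demands that entire rows of $M_i$ vanish. Here positivity is the decisive ingredient: if $M_i$ is PSD and $(M_i)_{\omega\omega}=0$, then $\abs{(M_i)_{\omega\omega'}}^2\le (M_i)_{\omega\omega}(M_i)_{\omega'\omega'}=0$ for all $\omega'$, so the offending row and column of $M_i$ automatically vanish. Thus the maximum likelihood condition already forces $D_iM_i=0$ for all $i$, hence equality in the bound, hence optimality. This observation closes the equivalence and, as a bonus, shows the criterion is valid even when arbitrary (not necessarily $\beta$-diagonal) POVMs are allowed to compete.
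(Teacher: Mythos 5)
Your proof is correct and complete. Note, however, that the paper itself offers no proof of Proposition \ref{prop:C}: the author explicitly omits proofs in the introductory sections (deferring to the cited lecture notes), and only sketches the idea in the preceding example --- ``group the sum row-wise, and use the fact that each row of the POVM sums up to $1$.'' Your upper bound $\sum_i \Tr A_i M_i \le \Tr S$ is the rigorous form of exactly that sketch, so on the value of $P_s^*$ the two routes essentially coincide. What your write-up adds beyond the paper's sketch is twofold. First, you prove the bound against \emph{all} POVMs, not merely $\beta$-diagonal ones; this is genuinely needed, since $P_s^*$ is defined as a maximum over arbitrary POVMs, and the paper disposes of this point only through a separate unproved remark that classical measurements suffice. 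Second, and more substantially, you establish the full ``optimal if and only if maximum likelihood'' claim: the forward direction via the equality case $\Tr XY=0 \iff XY=0$ of Proposition \ref{prop:ort}, and the converse by the observation that a PSD matrix with a vanishing diagonal entry has vanishing row and column there, so the maximum likelihood condition (which a priori constrains only the diagonal entries of the $M_i$) already forces $(S-A_i)M_i=0$ and hence equality. That converse is precisely the step a careless argument would miss --- the definition controls $(M_i)_{\omega\omega}$ while equality demands the vanishing of whole rows --- and you close it correctly.
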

Here the supremum means that
\begin{equation}\label{eq:supC}
{\sup} \{A_1,...,A_r\}:=\min_Y \{Y :\ Y\geq A_k,\ k=1,...,r  \},
\end{equation}
which exists because $\mathcal{B}(\mathcal{H})_{sa}^\beta$ forms not only a partially ordered set (for the PSD ordering), but a lattice, that is, for all nonempty finite subsets of $\mathcal{B}(\mathcal{H})_{sa}^\beta$ have both a supremum and an infimum.

\subsubsection{Quantum case}

One may wonder whether the expression \eqref{eq:Cps} in Proposition \ref{prop:C} can be straightforwardly generalized to the non-commutative case $\mathcal{B}(\mathcal{H})_{sa}$. The answer is “no”, because a finite set of PSD operators in general does not have a supremum. This is because on the RHS of \eqref{eq:supC}, the set $\{Y:Y\geq A_k,\ k=1,...,r  \}$ has minimal element only within $\mathcal{B}(\mathcal{H})_{sa}^\beta$, it has no minimal element within $\mathcal{B}(\mathcal{H})_{sa}$ (not even when the operators $A_k$ mutually commute, e.g., $\diag (1,2)$ and $\diag (2,1)$). However, it has a unique minimal element within $\mathcal{B}(\mathcal{H})_{sa}$ in terms of the \textit{trace ordering}. In view of the above, we can define the following ``hybrid supremum'' (as in \cite{Milan}),
\begin{equation}\label{eq:supQ}
{\sup}^* \{A_1,...,A_r\}:=\underset{Y}{\arg \min} \{\Tr Y :\ Y\geq A_k,\  k=1,...,r  \}. 
\end{equation}
Which is not a supremum in the ordinary mathematical sense, because it uses two different orderings (this is what the word hybrid refers to). It creates the set of upper bounds $\{Y:Y\geq A_k,\ k=1,...,r  \}$ with respect to the PSD ordering, but calculate the minimum with respect to the trace ordering.

With the above introduced ``hybrid supremum'' we can give the optimal success probability (with a formula \eqref{eq:QCps} similar to the classical one \eqref{eq:Cps}) and an optimal measurement. 

\begin{theorem}
For $A_1,...,A_r\in \mathcal{B}(\mathcal{H})_{+}$ (i.e., quantum case) the optimal success probability is
\begin{align}\label{eq:Qps}
P_s^*(A_1,...,A_r)&=\min \{\Tr Y :\ Y\geq A_k,\ k=1,...,r  \}  \\
& =\Tr {\sup}^* \{A_1,...,A_r\} \label{eq:QCps},
\end{align}
and a measurement M is optimal if and only if
\begin{align}\label{eq:max_measure}
\forall i: \quad  A_i\leq \sum_{k=1}^r A_k M_k,
\end{align}
and morover, an optimal measurement satisfies
\[
\forall i: \quad  M_i(A_i-A_k)M_k=0.
\]
\end{theorem}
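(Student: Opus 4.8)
The plan is to recognize this as a semidefinite program and to prove both the value formula and the optimality conditions through Lagrange (conic) duality, the two inequalities coming from weak duality and the matching value from strong duality. First I would establish weak duality. Fix a POVM $M=\{M_k\}$ and any self-adjoint $Y$ with $Y\geq A_k$ for every $k$. Since $Y-A_i\geq 0$ and $M_i\geq 0$ we have $\Tr(Y-A_i)M_i\geq 0$, hence
\[
\sum_i \Tr A_i M_i \;\leq\; \sum_i \Tr Y M_i \;=\; \Tr Y\Big(\sum_i M_i\Big)\;=\;\Tr Y ,
\]
using $\sum_i M_i=I$. Taking the supremum over $M$ and the infimum over feasible $Y$ gives $P_s^*\leq \min\{\Tr Y : Y\geq A_k\ \forall k\}$; the minimum is genuinely attained, because the feasible set is closed and its intersection with $\{\Tr Y\leq c\}$ (for one feasible $c$) is compact — every feasible $Y$ satisfies $0\leq A_1\leq Y$, so $\norm{Y}\leq \Tr Y$ is bounded. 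This already yields the $\leq$ half of \eqref{eq:Qps}.

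The matching lower bound is where I expect the main obstacle, namely verifying strong duality and, with it, the self-adjointness of $\sum_k A_kM_k$. The dual problem is strictly feasible: $Y=cI$ with $c>\max_k\norm{A_k}$ satisfies $Y>A_k$ for all $k$, so Slater's condition holds, the duality gap vanishes, and both optima are attained. Let $M^*$ be an optimal POVM (which exists since the POVM set is compact and $M\mapsto P_s(M)$ is continuous, as noted earlier) and let $Y^*$ be an optimal dual point. Zero gap forces complementary slackness: $\sum_i\Tr(Y^*-A_i)M_i^*=\Tr Y^*-P_s(M^*)=0$, and since each summand $\Tr(Y^*-A_i)M_i^*$ is nonnegative, each vanishes; by Proposition \ref{prop:ort}, applied to the PSD pair $Y^*-A_i$ and $M_i^*$, this means $(Y^*-A_i)M_i^*=0$. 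Summing $Y^*M_i^*=A_iM_i^*$ over $i$ gives $Y^*=\sum_i A_iM_i^*$, which is therefore self-adjoint. This establishes \eqref{eq:QCps} and identifies the optimal dual variable with $\sum_i A_iM_i^*$.

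Finally I would read off the characterizations. For the ``only if'' direction, any optimal $M$ has zero gap against $Y^*$, hence $(Y^*-A_i)M_i=0$ for all $i$, so $\sum_k A_kM_k=Y^*$ is self-adjoint and $\geq A_i$, i.e. $A_i\leq\sum_k A_kM_k$. For ``if'', if $A_i\leq\sum_k A_kM_k=:Y$ for all $i$, then $Y$ (necessarily self-adjoint for the inequality to be meaningful) is dual feasible, while $\Tr Y=\sum_k\Tr A_kM_k=P_s(M)$; weak duality gives $P_s(M)=\Tr Y\geq P_s^*$, forcing $P_s(M)=P_s^*$, so $M$ is optimal. For the last identity, take adjoints of $(Y^*-A_i)M_i=0$ to obtain $M_i(Y^*-A_i)=0$; then $M_iA_iM_k=M_iY^*M_k=M_iA_kM_k$, whence $M_i(A_i-A_k)M_k=0$. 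The single point demanding care throughout is precisely that $\sum_k A_kM_k$ need not be self-adjoint for a general POVM, but is guaranteed to be so at optimality by complementary slackness — which is exactly what makes the inequality in the optimality criterion well-posed.
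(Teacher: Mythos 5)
Your proof is correct, but there is nothing in the thesis to compare it against: this theorem is one of the ``known theorems'' whose proofs the author explicitly omits, deferring to the cited lecture notes and literature. Judged on its own merits, your argument is sound and is essentially the standard SDP-duality route to the Holevo--Yuen--Kennedy--Lax conditions. The weak-duality step ($\Tr(Y-A_i)M_i\geq 0$ summed against $\sum_i M_i=I$), the compactness argument for attainment of the dual minimum (any feasible $Y$ satisfies $0\leq A_1\leq Y$, hence $\norm{Y}\leq \Tr Y$), the zero gap from Slater's condition, and the upgrade of complementary slackness from $\Tr(Y^*-A_i)M_i^*=0$ to the operator identity $(Y^*-A_i)M_i^*=0$ via Proposition \ref{prop:ort} all check out, and the latter correctly yields both the identification $Y^*=\sum_k A_kM_k^*$ and the two optimality statements, including $M_i(A_i-A_k)M_k=0$ by taking adjoints. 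One piece of wording is loose: strict feasibility of the $Y$-problem (Slater on that side) gives zero gap together with attainment of the \emph{POVM} maximum, not of the $Y$-minimum itself; but since you establish attainment of the $Y$-minimum separately by compactness (and the POVM maximum by compactness of the POVM set), the argument is complete as written. Your closing observation --- that self-adjointness of $\sum_k A_kM_k$ is not assumed but forced at optimality, which is what makes the criterion \eqref{eq:max_measure} well-posed --- is exactly the subtle point of this theorem, and handling it through complementary slackness is the cleanest way to do it.
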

A major difference between the classical and the quantum case is that in the latter, an optimal measurement can not be explicitly given in general, and \eqref{eq:max_measure} can only be used to verify that a given measurement is optimal or not. Likewise, the optimal success probability can be computed explicitly in the classical case, but not in the general quantum case. An important exception is when there are only two states, which we will discuss in the next subsection.

\subsection{Binary state discrimination}

The main feature of this special case compared to
the general state discrimination problem is that we can give explicit formulas for the optimal success probability and the measurements achieving it.

\begin{table}[H]
\begin{tabular}{ccll}
& \textit{Alice} & $\varrho_i \ \longrightarrow \ \varrho_i$ & \textit{Bob}    \\[10pt]
$p  \quad $                   & 1: $\varrho_1$                     &                                                           & 1: $M_1:=T$     \\[10pt]
$1-p  \quad $               & 2: $\varrho_2$                     &                                                           & 2: $M_2:=I-T$
\end{tabular}
\end{table}
Now we turn to the general problem as before,
\begin{table}[H]
\begin{tabular}{ccll}
& \textit{Alice} & $A_i \ \longrightarrow \ A_i$ & \textit{Bob}    \\[10pt]
                    & 1: $A_1$                     &                                                           & 1: $T$     \\[10pt]
                & 2: $A_2$                     &                                                           & 2: $I-T$
\end{tabular}
\end{table}
In this case, the error probability defined as
\begin{align}\label{eq:error_binary}
   P_e(T):=\Tr A_1 (I-T)+\Tr A_2 T  
\end{align}
(based on \eqref{eq:p_e}), and the optimal error probability is
\begin{align}\label{eq:error_opt_binary}
   P_e^*(A_1,\, A_2):=\min_{0\leq T\leq I} \Big\{\Tr A_1 (I-T)+ \Tr A_2 T\Big\}.
\end{align}

As we told, the important feature of the binary case is that we can give explicitly the optimal success probability and an optimal measurement.
\begin{theorem}\label{th:P^*}
For $A_1,A_2\in \mathcal{B}(\mathcal{H})_{+}$ (i.e., binary quantum case) the optimal success probability is
\begin{align}\label{eq:P^*}
    P_e^*(A_1,A_2)=\Tr \frac{1}{2}(A_1+A_2-\abs{A_1-A_2}),
\end{align}
and the minimum is attained by an operator 
\[
\{A_1-A_2 >0 \} \leq T \leq \{A_1-A_2 \geq 0 \}.
\]
\end{theorem}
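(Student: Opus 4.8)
The plan is to turn the constrained minimization into the maximization of a single linear trace functional, and then to read everything off the positive/negative part decomposition of the self-adjoint operator $\Delta := A_1 - A_2$. First I would rewrite the objective \eqref{eq:error_binary} as
\[
P_e(T) = \Tr A_1 - \Tr(\Delta T),
\]
so that minimizing $P_e$ over $0 \leq T \leq I$ is equivalent to maximizing $\Tr(\Delta T)$ over the same set. I then decompose $\Delta = \Delta_+ - \Delta_-$ into its positive and negative parts, so that $\Delta_\pm \geq 0$, $\Delta_+ \Delta_- = 0$, $\abs{\Delta} = \Delta_+ + \Delta_-$ and $\Delta_+ = \tfrac12(\Delta + \abs{\Delta})$, giving $\Tr(\Delta T) = \Tr(\Delta_+ T) - \Tr(\Delta_- T)$.

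The core estimate is $\Tr(\Delta T) \leq \Tr \Delta_+$ for every admissible $T$. This rests on two elementary facts about the trace of a product of PSD operators (which is always nonnegative): since $I - T \geq 0$ and $\Delta_+ \geq 0$ we get $\Tr(\Delta_+(I-T)) \geq 0$, hence $\Tr(\Delta_+ T) \leq \Tr \Delta_+$; and since $T \geq 0$ we get $\Tr(\Delta_- T) \geq 0$. Combining these yields the bound, and taking $T = \{\Delta > 0\}$ attains it, so $\max_T \Tr(\Delta T) = \Tr \Delta_+$. Substituting back gives
\[
P_e^* = \Tr A_1 - \Tr \Delta_+ = \Tr \tfrac12\bigl(A_1 + A_2 - \abs{A_1 - A_2}\bigr),
\]
which is precisely \eqref{eq:P^*}.

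For the characterization of optimizers I would observe that equality $\Tr(\Delta T) = \Tr \Delta_+$ forces equality in \emph{both} inequalities above, i.e.\ $\Tr(\Delta_+(I-T)) = 0$ and $\Tr(\Delta_- T) = 0$. By Proposition \ref{prop:ort} (the trace of a product of two PSD operators vanishes exactly when the product itself vanishes), these are equivalent to the operator equations $\Delta_+(I-T) = 0$ and $\Delta_- T = 0$. The second gives $\Image T \subseteq \ker \Delta_- = \Image\{\Delta \geq 0\}$, hence $T = \{\Delta\geq 0\}\,T\,\{\Delta\geq 0\} \leq \{\Delta \geq 0\}$; the first gives $\{\Delta>0\}\,T = \{\Delta>0\}$, forcing $T$ to act as the identity on $\Image\{\Delta>0\}$, hence $T \geq \{\Delta>0\}$. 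Together these are the claimed sandwich $\{\Delta>0\} \leq T \leq \{\Delta\geq 0\}$, with $T$ unconstrained (beyond $0\leq T\leq I$) on $\ker\Delta$.

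I expect the only genuinely delicate step to be this last one: passing from the trace-level equalities to the operator inequalities describing all optimizers, and in particular deducing $\{\Delta>0\}\leq T$ from $\Delta_+(I-T)=0$ and self-adjointness of $T$. The clean route is a two-by-two block computation in the decomposition $\mathcal{H} = \Image\{\Delta>0\} \oplus (\Image\{\Delta>0\})^\perp$: the relation $\{\Delta>0\}\,T = \{\Delta>0\}$ together with $T = T^*$ kills the off-diagonal blocks and fixes the top-left block to be the identity, so $T - \{\Delta>0\} \geq 0$; the dual statement is obtained identically from $\Delta_- T = 0$. Everything else — the rewriting, the bound, and the evaluation of $P_e^*$ — is routine once the positive/negative part decomposition is in hand.
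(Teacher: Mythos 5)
Your proof is correct and complete: the reduction of the minimization to maximizing $\Tr\bigl((A_1-A_2)T\bigr)$ over $0\leq T\leq I$, the upper bound $\Tr\Delta_+$ obtained from the Jordan decomposition together with nonnegativity of traces of products of PSD operators, and the block-decomposition argument that pins down the optimizers (and in particular exhibits $T=\{A_1-A_2>0\}$ as one of them) are all sound. The paper itself omits the proof of this theorem, deferring to its cited references, and your argument is precisely the standard Holevo--Helstrom proof found in such sources, so there is no methodological divergence to report.
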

Here we used the following notation. For a self-adjoint operator $X=\sum_\lambda \lambda P(\lambda)$ we denote $\{X >0 \}:=\sum_{\lambda>0} P(\lambda)$, that is the orthogonal projection to the subspace corresponding to the strictly positive eigenvalues (and for $\geq$ similarly).

In the case of states, that is, $A_1=p\varrho_1,\ A_2=(1-p)\varrho_2$, then  $\Tr(A_1+A_2)=1$, so \eqref{eq:P^*} reduces to
\[
P_e^*(A_1,A_2)=\frac{1}{2}-\frac{1}{2} \norm[\big]{A_1-A_2}_1.
\]

Finally we give an upper bound for the error probability in terms of the product of the states.
\begin{proposition}\label{prop:Au}
Let $A,B\in \mathcal{B}(\mathcal{H})_+$, then
\[
\frac{1}{2}(\Tr A +\Tr B)-\frac{1}{2}\norm{A-B}_1 \leq \Tr A^\alpha B^{1-\alpha}, \qquad \forall \alpha \in [0,1].
\]
\end{proposition}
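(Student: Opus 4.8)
The plan is to recognize that the left-hand side is exactly the optimal binary error probability, and then reduce the claim to the Audenaert et al. trace inequality. First I would observe, using Theorem~\ref{th:P^*} together with $\norm{X}_1=\Tr\abs{X}$ for self-adjoint $X$, that
\[
\tfrac12(\Tr A+\Tr B)-\tfrac12\norm{A-B}_1=\Tr\tfrac12\bigl(A+B-\abs{A-B}\bigr)=P_e^*(A,B).
\]
Thus the proposition is equivalent to the single assertion $P_e^*(A,B)\le \Tr A^\alpha B^{1-\alpha}$ for every $\alpha\in[0,1]$, i.e.\ that each quantity $\Tr A^\alpha B^{1-\alpha}$ upper bounds the optimal error.

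Next I would rewrite $P_e^*$ in a form tailored to the right-hand side. As operators one has $\tfrac12(A+B-\abs{A-B})=A-(A-B)\{A-B>0\}$, and hence, writing $P:=\{A-B>0\}$ and $P^\perp:=I-P=\{A-B\le 0\}$,
\[
P_e^*(A,B)=\Tr\bigl[A\,P^\perp\bigr]+\Tr\bigl[B\,P\bigr].
\]
So it remains to prove the trace inequality $\Tr[A^\alpha B^{1-\alpha}]\ge \Tr[A\,P^\perp]+\Tr[B\,P]$. Using the variational identity $\Tr[(A-B)P]=\max_{0\le X\le I}\Tr[(A-B)X]$ for the positive part, this is the same as exhibiting a single operator $0\le X\le I$ with $\Tr[A(I-X)]+\Tr[BX]\le \Tr[A^\alpha B^{1-\alpha}]$; the reformulation is convenient because $P_e^*=\min_{0\le T\le I}\bigl(\Tr A(I-T)+\Tr BT\bigr)$ is already a minimum, so any admissible $X$ automatically gives an upper bound.

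The heart of the argument is the scalar inequality $\min(a,b)\le a^\alpha b^{1-\alpha}$ for $a,b\ge 0$ (weighted AM--GM, i.e.\ concavity of $\log$). If $A$ and $B$ commuted I would simultaneously diagonalize them: $P$ selects the indices with $a_i>b_i$, so the right-hand side of the target inequality is $\sum_i\min(a_i,b_i)\le\sum_i a_i^\alpha b_i^{1-\alpha}=\Tr[A^\alpha B^{1-\alpha}]$, settling the commuting case. For the general case I would write $\Tr[A^\alpha B^{1-\alpha}]=\sum_{i,j}a_i^\alpha b_j^{1-\alpha}\,\abs{\braket{u_i,v_j}}^2$ in the two distinct eigenbases $\{u_i\}$ of $A$ and $\{v_j\}$ of $B$ and control the term $\Tr[A\,P^\perp]+\Tr[B\,P]$ against this double sum, following the argument of Audenaert et al.\ \cite{Aud1}.

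The main obstacle is precisely the non-commutativity: the projection $P=\{A-B>0\}$ is a spectral projection of $A-B$ and commutes with neither $A$ nor $B$, so the two eigenbases cannot be aligned and the term $\Tr[A\,P^\perp]+\Tr[B\,P]$ does not split across the double sum. One cannot circumvent this by dephasing onto the eigenbasis of $B$: since $x\mapsto x^\alpha$ is operator concave for $\alpha\in[0,1]$, the pinching $\mathcal E_B$ satisfies $\mathcal E_B(A)^\alpha\ge \mathcal E_B(A^\alpha)$, whence $\Tr[\mathcal E_B(A)^\alpha B^{1-\alpha}]\ge\Tr[A^\alpha B^{1-\alpha}]$, so passing to the commuting case \emph{over}estimates the right-hand side and is useless for a lower bound. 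The genuine content is therefore the operator inequality of \cite{Aud1}, which resolves the non-commutativity (e.g.\ by reducing, via the joint structure of the two spectral projections, to $2\times2$ blocks on which the scalar bound applies). Finally the endpoints $\alpha\in\{0,1\}$ are handled either directly with the support-projection convention $A^0=\{A>0\}$ or by continuity from the interior.
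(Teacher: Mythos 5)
You should first be aware that the paper contains no proof of this proposition at all: it is precisely the Audenaert et al.\ inequality underlying the quantum Chernoff bound (Theorem \ref{th:Chernoff}), and the thesis explicitly omits proofs of known results, deferring to \cite{QInotes, Aud1}. Measured as a self-contained argument, your proposal has a genuine gap, and the gap is the proposition itself. Your opening steps are correct but are pure rewriting: by Theorem~\ref{th:P^*} and $\norm{A-B}_1=\Tr\abs{A-B}$, the left-hand side equals $P_e^*(A,B)=\Tr[A(I-P)]+\Tr[BP]$ with $P=\{A-B>0\}$, so the ``Audenaert et al.\ trace inequality'' you reduce to is verbatim the statement to be proven. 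After the commuting case (which is fine, via $\min(a,b)\le a^\alpha b^{1-\alpha}$) and your correct -- and genuinely insightful -- observation that pinching onto the eigenbasis of $B$ bounds the error by the \emph{larger} quantity $\Tr[\mathcal{E}_B(A)^\alpha B^{1-\alpha}]$ and is therefore useless, the entire non-commutative content is discharged by citation to \cite{Aud1}. That is circular as a proof, even if it is honest as a literature pointer (and a pointer is indeed all the paper itself offers).

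Moreover, your sketch of how \cite{Aud1} resolves the non-commutativity points in the wrong direction. The double-sum representation $\Tr A^\alpha B^{1-\alpha}=\sum_{i,j}a_i^\alpha b_j^{1-\alpha}\abs{\braket{u_i,v_j}}^2$ is the Nussbaum--Szko\l{}a device used for the \emph{converse} part of the Chernoff bound; to make it prove the present inequality you would need $\Tr[A(I-P)]+\Tr[BP]\le\sum_{i,j}\min(a_i,b_j)\abs{\braket{u_i,v_j}}^2$, a comparison of the quantum error with the classical error of the Nussbaum--Szko\l{}a distributions that you do not establish and that is not how the inequality is proven (nor is any $2\times2$-block reduction). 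The actual missing heart is an operator-monotonicity argument (Ozawa's streamlined version, which is what \cite{QInotes} presents): set $\Delta_+:=(A-B)\{A-B>0\}\ge0$ and $\tilde B:=B+\Delta_+$, so that $\tilde B\ge A$, $\tilde B\ge B$, and the left-hand side equals $\Tr A-\Tr\Delta_+$; then, using that $t\mapsto t^\alpha$ and $t\mapsto t^{1-\alpha}$ are operator monotone,
\begin{align*}
\Tr A-\Tr A^\alpha B^{1-\alpha}
&=\Tr\bigl[A^\alpha(A^{1-\alpha}-B^{1-\alpha})\bigr]
\le\Tr\bigl[A^\alpha(\tilde B^{1-\alpha}-B^{1-\alpha})\bigr]\\
&\le\Tr\bigl[\tilde B^\alpha(\tilde B^{1-\alpha}-B^{1-\alpha})\bigr]
=\Tr\tilde B-\Tr\tilde B^\alpha B^{1-\alpha}
\le\Tr\tilde B-\Tr B
=\Tr\Delta_+,
\end{align*}
where the first inequality holds because $A^{1-\alpha}\le\tilde B^{1-\alpha}$ is traced against $A^\alpha\ge0$, the second because $A^\alpha\le\tilde B^\alpha$ is traced against $\tilde B^{1-\alpha}-B^{1-\alpha}\ge0$, and the third because $\tilde B^\alpha\ge B^\alpha$ is traced against $B^{1-\alpha}\ge0$. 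This short chain is the entire content of the proposition; everything your proposal actually establishes sits on the easy side of it.
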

We define the following quantity (which is a positive number), what we will use in asymptotic binary state discrimination, in the next section.
\begin{definition}\label{def:Chernoff}
For $A,B\in \mathcal{B}(\mathcal{H})_+$, we define
\begin{equation}
C(A,B):=-\min_{0\leq \alpha \leq 1} \log \Tr A^\alpha B^{1-\alpha},
\end{equation}
the \emph{Chernoff divergence} of $A$ and $B$.
\end{definition}

\section{Asymptotic state discrimination}\label{sec:Asy}

Consider the communication scenario what we introduced in the previous section. A natural way to reduce the error if Alice sends the message several times (say $n$ times).
\begin{table}[H]
\begin{tabular}{clll}
\multicolumn{1}{l}{} & \multicolumn{1}{l}{\textit{Alice}} & \multicolumn{1}{c}{$\varrho_i^{\otimes n} \ \longrightarrow \ \varrho_i^{\otimes n}$} & \textit{Bob}    \\[10pt]
$p_1 \quad$                   & 1: $\varrho_1$                     &                                                           & 1: $M_1$     \\[10pt]
$p_2 \quad $               & 2: $\varrho_2$                     &                                                           & 2: $M_2$     \\[10pt]
               &   $\vdots$                     &                                                           &   $\vdots$     \\[10pt]
$p_r \quad $               & $r$: $\varrho_r$                     &                                                           & $r$: $M_r$     \\[10pt]
\end{tabular}
\end{table}
It is known that the error vanishes exponentially fast in the number of repetitions $n$.
The aim of \textit{asymptotic state discrimination} is to determine this exponent. In the case of \textit{binary state discrimination}, i.e., when $r=2$ (so there are only two possible messages), the exponent is the well-known \emph{Chernoff divergence} ($C(\varrho_1,\varrho_2)$) of the states $\varrho_1$ and $\varrho_2$, what we will discuss in detail in the next section. (In the case $r>2$, the exponent is the ``worst case exponent'', that is the minimum of the Chernoff divergences of each pairs, i.e., $\min_{i\neq j}C(\varrho_i,\varrho_j)$. This was a long-standing open problem (the Nussbaum and Szkoła's conjecture) which Ke Li solved in 2015 in \cite{KeLi}.)

\subsection{Binary state discrimination}

\begin{table}[H]
\begin{tabular}{ccll}
  &  \textit{Alice}  &  $\varrho_i^{\otimes n} \ \longrightarrow \ \varrho_i^{\otimes n}$  & \textit{Bob}    \\[10pt]
$p $                   & 1: $\varrho$                     &                                                           & 1: $M_1:=T$     \\[10pt]
$q $               & 2: $\sigma$                     &                                                           & 2: $M_2:=I-T$
\end{tabular}
\end{table}
where the $p,q$ are the prior probabilities. We define the following notation for the series of states
\[
\vec \varrho:=\{\varrho^{\otimes n}\}_{n\in \mathbb{N}} \ \textrm{,}\quad  \vec \sigma:=\{\sigma^{\otimes n}\}_{n\in \mathbb{N}},
\] 
 and we use the notation $\varrho_n$ and $\sigma_n$ for $n$-th element of the series. With this notation, the optimal error probability (defined in \eqref{eq:error_opt_binary}) 
\[
P_e^*(p\varrho_n,\,q\sigma_n) =\min_{0\leq T_n\leq I} \Big\{ p\Tr \varrho_n (I-T_n)+ q\Tr \sigma_n T_n \Big\}
\]
is a function of $n$ (where $T_n \in \mathcal{S}(\mathcal{H}^{\otimes n}))$. The error probability is known to decay exponentially fast in the number of repetitions $n$, and hence we are interested in the exponents (which are negative numbers) defined as follows.
\begin{definition}\label{def:p_e}
We define the following \textit{error exponents}
\begin{align*}
  \underline{p}_e(\vec \varrho,\,\vec\sigma) &:=\liminf_{n\to \infty} \frac{1}{n} \log P_e^*(\varrho_n,\,\sigma_n), \\
    \overline{p}_e(\vec \varrho,\,\vec\sigma) &:=\limsup_{n\to \infty} \frac{1}{n} \log  P_e^*(\varrho_n,\,\sigma_n) 
\end{align*}
and
\begin{align*}
  p_e(\vec \varrho,\,\vec\sigma) &:=\lim_{n\to \infty} \frac{1}{n} \log P_e^*(\varrho_n,\,\sigma_n), \\
\end{align*}
if the limit exists.
\end{definition}

\begin{remark}
Recall that for a sequence of real numbers $(x_n)$, the limit superior and limit inferior always exist (as the extended real number line is complete), and  
\[\liminf_{n\to \infty} x_n \leq \limsup_{n\to \infty} x_n.
\]
The ordinary limit exists precisely when equality holds, and then it has the same value.
\end{remark}

\begin{remark}\label{re:without}
We defined the error exponents without consideration of the prior probabilities. This is because
\[
\min\{p,q\}P_e^*(\varrho_n,\,\sigma_n) \leq P_e^*(p\varrho_n,\,q\sigma_n) \leq  \max\{p,q\} P_e^*(\varrho_n,\,\sigma_n),
\]
and taking the limit, it follows that
\[
\liminf_{n\to \infty} \frac{1}{n} \log P_e^*(p\varrho_n,\,q\sigma_n) =\liminf_{n\to \infty} \frac{1}{n} \log P_e^*(\varrho_n,\,\sigma_n),
\]
that is the prior probabilities does not influence these quantities (the same is true for the supremum and for the ordinary limit).
\end{remark}

The following theorem gives the answer for the error exponent, which is the well-known quantum Chernoff bound \cite{Aud1, Aud2,Szkola}. 

\begin{theorem}[quantum Chernoff bound]\label{th:Chernoff}
The limit $p_e(\vec \varrho,\,\vec\sigma)$ exists, and
\begin{equation}\label{eq:chernoff}
 p_e(\vec \varrho,\,\vec\sigma)=-C(\varrho,\sigma),
\end{equation}
where $C(\varrho,\sigma)$ is the \emph{Chernoff divergence} of $\varrho$ and $\sigma$ (see Definition \ref{def:Chernoff}).
\end{theorem}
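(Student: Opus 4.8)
The plan is to prove the two matching inequalities $\overline{p}_e(\vec\varrho,\vec\sigma)\le -C(\varrho,\sigma)$ (\emph{achievability}) and $\underline{p}_e(\vec\varrho,\vec\sigma)\ge -C(\varrho,\sigma)$ (\emph{optimality}). Since $\underline{p}_e\le\overline{p}_e$ always holds, these two bounds sandwich the quantity between $-C(\varrho,\sigma)$ and $-C(\varrho,\sigma)$, forcing the limit $p_e(\vec\varrho,\vec\sigma)$ to exist and to equal $-C(\varrho,\sigma)$.

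For achievability I would use the tools already assembled. By Theorem \ref{th:P^*} the optimal error equals $P_e^*(A_1,A_2)=\frac12(\Tr A_1+\Tr A_2)-\frac12\|A_1-A_2\|_1$, which is exactly the left-hand side of Proposition \ref{prop:Au}; hence $P_e^*(A_1,A_2)\le\Tr A_1^\alpha A_2^{1-\alpha}$ for every $\alpha\in[0,1]$. Applying this with $A_1=\varrho_n=\varrho^{\otimes n}$ and $A_2=\sigma_n=\sigma^{\otimes n}$, and using $(\varrho^{\otimes n})^\alpha=(\varrho^\alpha)^{\otimes n}$ together with the multiplicativity $\Tr(\varrho^\alpha\sigma^{1-\alpha})^{\otimes n}=(\Tr\varrho^\alpha\sigma^{1-\alpha})^n$, gives $P_e^*(\varrho_n,\sigma_n)\le(\Tr\varrho^\alpha\sigma^{1-\alpha})^n$. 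Taking $\frac1n\log$, letting $n\to\infty$, and minimizing over $\alpha$ yields $\overline{p}_e\le\min_{0\le\alpha\le1}\log\Tr\varrho^\alpha\sigma^{1-\alpha}=-C(\varrho,\sigma)$ by Definition \ref{def:Chernoff}. This half is short and essentially self-contained given the earlier results.

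The optimality bound is the substantial part, and I would obtain it by reducing to a classical problem via the Nussbaum--Szkoła construction. Writing $\varrho=\sum_i\lambda_i\ket{e_i}\bra{e_i}$ and $\sigma=\sum_j\mu_j\ket{f_j}\bra{f_j}$, define probability distributions $P(i,j)=\lambda_i|\langle e_i,f_j\rangle|^2$ and $Q(i,j)=\mu_j|\langle e_i,f_j\rangle|^2$ on the pair index set; a direct computation gives $\sum_{i,j}P(i,j)^\alpha Q(i,j)^{1-\alpha}=\Tr\varrho^\alpha\sigma^{1-\alpha}$, so the classical Chernoff quantity of $(P,Q)$ matches the quantum one. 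The crucial step is a genuine lower bound on the quantum error, namely $P_e^*(A,B)\ge\frac12\sum_{i,j}\min(a_i,b_j)|\langle e_i,f_j\rangle|^2$ for PSD operators $A=\sum_i a_i\ket{e_i}\bra{e_i}$ and $B=\sum_j b_j\ket{f_j}\bra{f_j}$; I stress that this runs \emph{opposite} to the trivial estimate "measuring cannot make states more distinguishable'', and so must be established as a separate operator inequality rather than by applying a measurement. Since the Nussbaum--Szkoła distributions of $\varrho^{\otimes n},\sigma^{\otimes n}$ are precisely $P^{\otimes n},Q^{\otimes n}$, this yields $P_e^*(\varrho_n,\sigma_n)\ge\frac12\sum_{\vec x}\min\bigl(P^{\otimes n}(\vec x),Q^{\otimes n}(\vec x)\bigr)$.

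It then remains to prove the converse part of the \emph{classical} Chernoff bound, $\liminf_n\frac1n\log\sum_{\vec x}\min(P^{\otimes n}(\vec x),Q^{\otimes n}(\vec x))\ge\log\min_\alpha\sum_x P(x)^\alpha Q(x)^{1-\alpha}$, and I expect this to be the main obstacle. The standard route is a change of measure: let $\alpha^*$ minimize $\psi(\alpha)=\log\sum_x P(x)^\alpha Q(x)^{1-\alpha}$ and pass to the tilted law $R(x)\propto P(x)^{\alpha^*}Q(x)^{1-\alpha^*}$. The optimality condition $\psi'(\alpha^*)=0$ makes the per-symbol log-likelihood ratio $\log(P/Q)$ have zero mean under $R$, so by the central limit theorem the accumulated ratio stays of order $\sqrt n$ on a set of constant $R^{\otimes n}$-probability; on that set $P^{\otimes n}$ and $Q^{\otimes n}$ both agree with $e^{n\psi(\alpha^*)}R^{\otimes n}$ up to a factor $e^{O(\sqrt n)}$, which contributes only the sub-exponential correction and leaves the exponent $\psi(\alpha^*)=-C(\varrho,\sigma)$. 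Combining this with the achievability bound closes the argument; the only remaining points needing care are the edge behaviour at $\alpha\in\{0,1\}$ and the case of unequal supports of $\varrho$ and $\sigma$, which affect constants but not the exponent.
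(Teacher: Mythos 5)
This theorem is one the paper itself does not prove: it is a known result quoted from the literature (the citations to Audenaert et al.\ and Nussbaum--Szko\l{}a), accompanied only by the remark that the $\leq$ direction of \eqref{eq:chernoff} is a direct consequence of Proposition \ref{prop:Au}. Your proposal correctly reconstructs exactly that standard proof --- achievability from Proposition \ref{prop:Au} together with multiplicativity of $\Tr\varrho^\alpha\sigma^{1-\alpha}$ under tensor powers, precisely as the paper's remark indicates, and optimality via the Nussbaum--Szko\l{}a distributions plus the classical Chernoff converse --- so it follows the same route the paper implicitly relies on through its references.
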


\begin{remark}
The $\leq$ direction in \eqref{eq:chernoff} is a direct consequence of Proposition \ref{prop:Au}. 
\end{remark}

\subsection{Composite binary state discrimination (the conjecture)}

Suppose that the system what Bob got from Alice is in the state corresponding to one of the messages $i\in\{0,...,r\}$,  but this time he is not interested in the precise value of $i$, he only wants to decide whether $i$ falls in a certain subset $H$. That is, Bob partitions $H$ into $k$ subsets. (So this time the hypotheses are represented by sets of states.) This problem is called \textit{composite hypotheses testing}. When Bob partition $H$ into two subsets (i.e., $k=2$), the problem in called \textit{composite binary hypotheses testing}. We will consider this latter in the simplest case when one of the subsets contains only one element. 
So, for example, Bob may only be interested in that he got the message $0$ or not. That is, he wants to distinguish the two sets  $\{\varrho\}$ and $\{\sigma_1,...,\sigma_r\}$ (as illustrated in the following table).
\begin{table}[H]
\begin{tabular}{cccl}
  &  \textit{Alice}  &   $\quad \varrho_i^{\otimes n} \ \longrightarrow \ \varrho_i^{\otimes n}$  & \hspace{28pt} \textit{Bob}    \\[10pt]
$p_0 $                   & 0: $\varrho$                    &                                                           &\hspace{29pt}  0:  $M_1:=T$     \\[10pt]
$p_1 $                   & 1: $\sigma_1$                    &                                                           & \{1,...,r\}:  $M_2:=I-T$     \\[10pt]
                  &   $\vdots$                     &                                                           &   \\[10pt]
$p_r $               & r: $\sigma_r$                     &                                                           &  
\end{tabular}
\end{table}
\noindent
Although it is known that the error vanish exponentially fast in the number of repetitions, this time the exponent is unknown.
\begin{remark}
In the above setup, Bob has to distinguish between the following two states
\[
 \varrho^{\otimes n},\quad  \sigma_1^{\otimes n}+ \dots + \sigma_r^{\otimes n}.
\]
It is important that Alice does not know what kind of grouping Bob is interested in, that is, how he partitioned the outcomes. If Alice knew it in advance, of course, she would send just (the information from which set the state is, that is) the convex combination of the states $n$ times, and in this case Bob should distinguish between
\[
 \varrho^{\otimes n},\quad  (\sigma_1+ \dots + \sigma_r)^{\otimes n},
\]
which is the simpler, non-composite binary state discrimination problem (in the previous subsection). Obviously, in the single shot case (i.e., $n=1$) these two problems are equivalent.
\end{remark}
Like before, we will immediately consider the \textit{generalized} problem, so instead of the states with the prior probabilities, we will consider arbitrary PSD operators.
\begin{table}[H]
\begin{tabular}{cccl}
  &  \textit{Alice}  &   $\quad \varrho_i^{\otimes n} \ \longrightarrow \ \varrho_i^{\otimes n}$  & \hspace{28pt} \textit{Bob}    \\[10pt]
                   & 0: $A$                    &                                                           &\hspace{29pt}  0:  $M_1:=T$     \\[10pt]
                   & 1: $B_1$                    &                                                           & \{1,...,r\}:  $M_2:=I-T$     \\[10pt]
                  &   $\vdots$                     &                                                           &   \\[10pt]
                & r: $B_r$                     &                                                           &  
\end{tabular}
\end{table}
In this scenario we define the \textit{worst-case error probability} as 
\begin{align}
   P_e(T):=\Tr A (I-T)+\max_i\{\Tr B_i T \} 
\end{align}
(similarly like \eqref{eq:error_binary}), and the \textit{optimal worst-case error probability} as 
\begin{align}
   P_e^*( A,\, \{B_1,...,B_r \}):=\inf_{0\leq T\leq I} \Big\{\Tr A (I-T)+\max_i\{\Tr B_i T \}\Big\}.
\end{align}
By the same argument as in Reference \ref{re:without}, we define the following composite error exponents without consideration of the prior probabilities. (We denote the sequences of states $\vec \varrho:=\{\varrho^{\otimes n}\}_{n\in \mathbb{N}}$ and
 $\vec \sigma:=\{\sigma^{\otimes n}\}_{n\in \mathbb{N}}$ as before, where $\varrho_n$ and $\sigma_n$ stands for the $n$-th element of the series.) 
\begin{definition}
We define the following \textit{composite error exponents}
\begin{align*}
  \underline{p}_e(\vec \varrho, \{\vec\sigma_{1},...,\vec\sigma_{r}\}) &:=\liminf_{n\to \infty} \frac{1}{n} \log P_e^*(\varrho_n,\{\sigma_{1,n},...,\sigma_{r,n}\}), \\
    \overline{p}_e(\vec \varrho, \{\vec\sigma_{1},...,\vec\sigma_{r}\}) &:=\limsup_{n\to \infty} \frac{1}{n} \log P_e^*(\varrho_n,\{\sigma_{1,n},..., \sigma_{r,n}\})
\end{align*}
and
\[
p_e(\vec \varrho, \{\vec\sigma_1,...,\vec\sigma_{r}\}) :=\lim_{n\to \infty} \frac{1}{n} \log P_e^*(\varrho_n,\{\sigma_{1,n},...,\sigma_{r,n}\}),
\]
if the limit exists.
\end{definition}
As we said, the error exponents are unknown, nevertheless we have trivial lower and upper bounds because of the following. For every $j$ and $T$, we have
\[
\Tr A (I-T)+\Tr B_j T \leq \Tr A (I-T)+\max_i\{\Tr B_i T \}  \leq  \Tr A (I-T)+\sum_i\Tr B_i T,
\]
and taking the infimum in $T$ yields the following trivial single-shot inequality
\begin{equation}
\max_i P_e^*(A,B_i) \leq P_e^*(A,\{B_1,\,...,B_r\}) \leq   P_e^*(A,\sum_i B_i).
\end{equation}
This is true for all $A=\varrho_n$ and $B=\sigma_{i,n}$ states, therefore by, taking the limit in $n$, we get the following proposition.
\begin{proposition}\label{pr:compineq}
For the composite error exponents, we have the following lower and upper bounds
\[
\max_i  p_e(\vec \varrho,\,\vec\sigma_i) \leq  \underline{p}_e(\vec \varrho, \{\vec\sigma_{1},..., \vec\sigma_{r}\}) \leq \overline{p}_e(\vec \varrho, \{\vec\sigma_{1},...,\vec\sigma_{r}\})  \leq  \overline{p}_e(\vec \varrho,\,\sum_{i=1}^r \vec\sigma_i).
\]
\end{proposition}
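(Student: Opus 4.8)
The plan is to start from the single-shot inequality established just above the statement, namely
\[
\max_i P_e^*(A,B_i) \leq P_e^*(A,\{B_1,\ldots,B_r\}) \leq P_e^*\Bigl(A,\sum_i B_i\Bigr),
\]
and to specialise it to the tensor-power sequences by setting $A=\varrho_n$ and $B_i=\sigma_{i,n}$ for each fixed $n$. Since $x\mapsto \tfrac{1}{n}\log x$ is strictly increasing on the positive reals, applying it preserves all three inequalities, so for every $n$ one obtains
\[
\max_i \tfrac{1}{n}\log P_e^*(\varrho_n,\sigma_{i,n}) \leq \tfrac{1}{n}\log P_e^*(\varrho_n,\{\sigma_{1,n},\ldots,\sigma_{r,n}\}) \leq \tfrac{1}{n}\log P_e^*\Bigl(\varrho_n,\sum_i\sigma_{i,n}\Bigr),
\]
where on the left I have used that $\log$ commutes with the finite maximum. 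The whole proof then reduces to passing to the appropriate limits in $n$ in this chain.

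For the right-hand inequality I would take $\limsup_n$ of both sides: a pointwise inequality valid for every $n$ is preserved under $\limsup$, so $\overline{p}_e(\vec\varrho,\{\vec\sigma_1,\ldots,\vec\sigma_r\}) \leq \overline{p}_e(\vec\varrho,\sum_i\vec\sigma_i)$, recognising the right-hand $\limsup$ as the definition of $\overline{p}_e(\vec\varrho,\sum_i\vec\sigma_i)$ with $\sum_i\vec\sigma_i$ the sequence $\{\sum_i\sigma_{i,n}\}_n$. The central inequality $\underline{p}_e \leq \overline{p}_e$ is simply the general fact that $\liminf\leq\limsup$ for any real sequence. For the left-hand inequality I would instead take $\liminf_n$: the pointwise bound gives $\liminf_n \max_i \tfrac{1}{n}\log P_e^*(\varrho_n,\sigma_{i,n}) \leq \underline{p}_e(\vec\varrho,\{\vec\sigma_1,\ldots,\vec\sigma_r\})$, and it remains only to identify the left-hand $\liminf$.

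The one step that needs genuine argument, and which I expect to be the main (if mild) obstacle, is the interchange of the $\liminf$ with the finite maximum on the left. Here I would invoke the quantum Chernoff bound (Theorem \ref{th:Chernoff}): for each fixed $i$ the sequence $a_n^{(i)} := \tfrac{1}{n}\log P_e^*(\varrho_n,\sigma_{i,n})$ converges, with limit $p_e(\vec\varrho,\vec\sigma_i)=-C(\varrho,\sigma_i)$ — which is also what makes $\max_i p_e(\vec\varrho,\vec\sigma_i)$ meaningful in the first place. Since there are only finitely many indices $i$ and each $a_n^{(i)}$ converges, the pointwise maximum $\max_i a_n^{(i)}$ converges to $\max_i \lim_n a_n^{(i)}$ by continuity of the finite $\max$, so its $\liminf$ equals $\max_i p_e(\vec\varrho,\vec\sigma_i)$. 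Chaining the three limit statements then yields exactly
\[
\max_i p_e(\vec\varrho,\vec\sigma_i) \leq \underline{p}_e(\vec\varrho,\{\vec\sigma_1,\ldots,\vec\sigma_r\}) \leq \overline{p}_e(\vec\varrho,\{\vec\sigma_1,\ldots,\vec\sigma_r\}) \leq \overline{p}_e\Bigl(\vec\varrho,\sum_{i=1}^r\vec\sigma_i\Bigr),
\]
as claimed. The finiteness of the index set is essential for this $\max$-$\liminf$ interchange, but that hypothesis is built into the problem.
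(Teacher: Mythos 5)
Your proposal is correct and follows essentially the same route as the paper: the single-shot inequality $\max_i P_e^*(A,B_i) \leq P_e^*(A,\{B_1,\ldots,B_r\}) \leq P_e^*(A,\sum_i B_i)$ specialised to $A=\varrho_n$, $B_i=\sigma_{i,n}$, followed by passing to the limit in $n$. The only difference is that you spell out the limit passage (in particular the interchange of the finite $\max$ with the $\liminf$, justified via Theorem \ref{th:Chernoff}) which the paper compresses into the phrase ``taking the limit in $n$''; this is a welcome clarification, not a different argument.
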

The following conjecture says that the above lower bound for the exponent is precisely the exponent (which is unsolved for several years, see in \cite{Milan}). 
\begin{conjecture}\label{conj} 
The limit $p_e(\vec \varrho, \{\vec\sigma_{1},..., \vec\sigma_{r}\})$ exists, and
\begin{equation}\label{eq:conj}
p_e(\vec \varrho, \{\vec\sigma_{1},..., \vec\sigma_{r}\})=\max_i  p_e(\vec \varrho,\,\vec\sigma_i),
\end{equation}
where $\vec \varrho:=\{\varrho^{\otimes n}\}_{n\in \mathbb{N}}$ and
 $\vec \sigma_i:=\{\sigma_i^{\otimes n}\}_{n\in \mathbb{N}}$ for $i\in[r]$.
\end{conjecture}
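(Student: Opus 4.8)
The plan is to split the claimed equality into the two inequalities that sandwich the composite exponent. The lower bound $\max_i p_e(\vec\varrho,\vec\sigma_i)\le \underline p_e(\vec\varrho,\{\vec\sigma_1,\dots,\vec\sigma_r\})$ is already furnished by Proposition \ref{pr:compineq}, and by the quantum Chernoff bound (Theorem \ref{th:Chernoff}) this quantity equals $-\min_i C(\varrho,\sigma_i)$. Hence the entire content of the conjecture is the reverse, \emph{achievability} inequality
\[
\overline p_e(\vec\varrho,\{\vec\sigma_1,\dots,\vec\sigma_r\})\ \le\ -\min_i C(\varrho,\sigma_i),
\]
i.e.\ the existence of a sequence of tests $0\le T_n\le I$ whose worst-case error $\Tr\varrho_n(I-T_n)+\max_i\Tr\sigma_{i,n}T_n$ decays at least as fast as $e^{-n\min_i C(\varrho,\sigma_i)}$. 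Once this is shown, $\underline p_e=\overline p_e$ and the limit exists automatically.

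To build such tests I would start from the optimal binary tests. For each $i$ let $T_n^{(i)}:=\{\varrho_n-\sigma_{i,n}>0\}$ be the Holevo--Helstrom projection, which by Theorem \ref{th:P^*} is optimal for the pair $(\varrho_n,\sigma_{i,n})$. Proposition \ref{prop:Au} gives $P_e^*(\varrho_n,\sigma_{i,n})\le e^{-nC(\varrho,\sigma_i)}$, and since the two nonnegative summands of $P_e(T_n^{(i)})$ add up to this optimal value, each is individually bounded:
\[
\Tr\varrho_n(I-T_n^{(i)})\le e^{-nC(\varrho,\sigma_i)},\qquad \Tr\sigma_{i,n}T_n^{(i)}\le e^{-nC(\varrho,\sigma_i)}.
\]
I would then define the composite test $T_n$ to \emph{accept $\varrho$ only when every individual test does}, the natural candidate being the projection onto $\bigcap_i\Image T_n^{(i)}$, so that $T_n\le T_n^{(i)}$ for every $i$. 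This choice immediately controls all type~II errors: for each $j$,
\[
\Tr\sigma_{j,n}T_n\ \le\ \Tr\sigma_{j,n}T_n^{(j)}\ \le\ e^{-nC(\varrho,\sigma_j)}\ \le\ e^{-n\min_i C(\varrho,\sigma_i)},
\]
whence $\max_j\Tr\sigma_{j,n}T_n$ already decays at the desired rate.

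The remaining, and decisive, term is the type~I error $\Tr\varrho_n(I-T_n)$, where $I-T_n$ is the projection onto $\sum_i\Image(I-T_n^{(i)})=\bigvee_i(I-T_n^{(i)})$. What one \emph{wants} is a union bound $\Tr\varrho_n\bigvee_i(I-T_n^{(i)})\le\sum_i\Tr\varrho_n(I-T_n^{(i)})\le r\,e^{-n\min_i C(\varrho,\sigma_i)}$; since $r$ is a fixed constant, the prefactor does not affect the exponent and this would finish the proof. When $\varrho$ and all the $\sigma_i$ commute the projections are simultaneously diagonal, this classical union bound holds verbatim, and the conjecture follows --- which is precisely why the known, ``symmetric'' cases go through.

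The main obstacle is exactly that for \emph{non-commuting} projections the union bound fails: $\bigvee_i(I-T_n^{(i)})\le\sum_i(I-T_n^{(i)})$ is false even as a statement about $\Tr\varrho_n(\cdot)$, and the best available general substitute (the quantum union bound of Sen/Gao) only yields $\Tr\varrho_n\bigvee_i(I-T_n^{(i)})\lesssim\big(\sum_i\Tr\varrho_n(I-T_n^{(i)})\big)^{1/2}$. The square root \emph{halves} the exponent, producing merely $\overline p_e\le-\tfrac12\min_i C(\varrho,\sigma_i)$, which is too weak to close the gap. Thus the crux is to recombine the non-commuting optimal tests without this factor-$2$ loss. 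I would try to exploit that all relevant operators live on tensor powers --- grouping the repetitions into blocks, symmetrising, or passing to a tilted/typical-projection refinement in which the pairwise overlaps between the rejection regions become asymptotically negligible --- so as to recover the sharp exponent. Controlling these cross terms in full generality is the heart of the difficulty, and is the reason the conjecture remains open outside structured special cases.
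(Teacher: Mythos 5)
Your proposal is not a proof, and you concede this yourself in your closing paragraph: the decisive step --- controlling the type~I error of the combined test --- is left open. For context, in the paper this statement is an \emph{open conjecture}; the paper never proves it in general, and its actual contribution (Theorem \ref{th:mywork}) verifies only one special case (two commuting multiples of projections and one pure state), by a route entirely different from yours. So the accurate assessment of your text is: the reduction to the achievability inequality $\overline p_e(\vec\varrho,\{\vec\sigma_1,\dots,\vec\sigma_r\})\le -\min_i C(\varrho,\sigma_i)$ is correct (it follows from Proposition \ref{pr:compineq} together with Theorem \ref{th:Chernoff}), and the individual bounds on the Helstrom tests $T_n^{(i)}$ are correct, but the combination step fails, and with it the whole argument.

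Moreover, the gap is worse than your own description suggests. For your specific candidate --- $T_n$ the projection onto $\bigcap_i\Image T_n^{(i)}$ --- no union bound holds with \emph{any} constant, not merely with a square-root loss: take $\Pi_1=\ke{e_1}\br{e_1}$ and $\Pi_2=\ke{w}\br{w}$ with $w=\cos\theta\, e_1+\sin\theta\, e_2$, and $\rho=\ke{e_1}\br{e_1}$; then the projection onto $\Image\Pi_1\cap\Image\Pi_2$ is $0$, so $\Tr\rho\,(I-\Pi_1\wedge\Pi_2)=1$, while $\sum_i\Tr\rho\,(I-\Pi_i)=\sin^2\theta$ is arbitrarily small. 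Intersecting supports can annihilate the acceptance probability completely, so the type~I error of your $T_n$ is simply uncontrolled. The Sen/Gao union bounds you invoke do not apply to the intersection projection at all; they concern \emph{sequential} tests of the form $\Pi_r\cdots\Pi_1\rho\,\Pi_1\cdots\Pi_r$, and for those your type~II argument breaks instead, because the sequential acceptance operator is dominated by $\Pi_1$ alone and not by every $\Pi_j$ (and even where Sen's bound applies, the square root halves the exponent, as you note). By contrast, the paper sidesteps test construction altogether: by Proposition \ref{pr:compineq} it suffices to show $\overline p_e(\vec\varrho,\vec\sigma_1+\vec\sigma_2)=\max_i p_e(\vec\varrho,\vec\sigma_i)$, i.e.\ to compute the \emph{non-composite} Helstrom error against the sum $\sigma_1^{\otimes n}+\sigma_2^{\otimes n}$ via Theorem \ref{th:P^*}; under assumptions (1)--(2) the trace norm $\norm{\varrho^{\otimes n}-(\sigma_1^{\otimes n}+\sigma_2^{\otimes n})}_1$ reduces, in a basis adapted to the projections, to that of a $3\times3$ or $4\times4$ matrix (Proposition \ref{prop:LHS}), whose eigenvalue asymptotics are then extracted by hand (Propositions \ref{prop:A} and \ref{prop:B}). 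This is exactly why the paper's result is confined to that special case, and why repairing your combination step without exponent loss would be a genuinely new contribution rather than a rederivation.
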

This conjecture is open even in the simplest case $r=2$, that is, when the first set consists of only one state $\{\varrho\}$ and the second consists of only two states $\{\sigma_1,\sigma_2\}$. Nevertheless, there are some known special cases (presented in the next section), where the conjecture is known to be true.

\begin{remark}
The $\geq$ direction in \eqref{eq:conj} is trivial from Proposition \ref{pr:compineq}. The interesting unknown part is the $\leq$ direction.
\end{remark}

\subsection{Known special cases}
The composite binary state discrimination problem (introduced in the previous section) is specified by two sets of states. In this subsection we introduce the known special cases (by the special sets of states). In these cases, analytical proofs show that the conjecture is true there. The goal is here to compare the cases in question without any detailed explanation. The following table serves as an overview of these cases. (The conjecture was stated for the second row.)

\begin{align*}
    \textrm{unknown case: } \quad    &\{\varrho_1,...,\varrho_r\}  \\
    &\{\sigma_1,...\sigma_m\}  \\[15pt]
    \textrm{simpler unknown case: } \quad    &\{\varrho\}  \\
    &\{\sigma_1,...\sigma_m\}  \\[15pt]
    \textrm{known special cases:\quad  case 1:} \quad    &\{\varrho\}  \\
    &\{\sigma\}  \\[15pt]
    \textrm{(classical) case 2:} \quad    &\{\Large(\ddots ),(\ddots )...(\ddots ) \} \\
    &\{\Large(\ddots ),(\ddots )...(\ddots ) \}  \\[15pt]
    \textrm{case 3:} \quad    &\{\ke{\varphi} \br{\varphi}\}  \\
    &\{\varrho_1,...\varrho_r\}  \\[15pt]
    \textrm{case 4:} \quad    &\{\ke{\varphi_1} \br{\varphi_1},...,\ke{\varphi_r} \br{\varphi_r}\}  \\
    &\{\ke{\psi_1} \br{\psi_1},...,\ke{\psi_m} \br{\psi_m}\}  \\[15pt]
    \textrm{case 5:} \quad    &\{\varrho\}  \\
    &\{\Large(\ddots ),(\ddots ),...,(\ddots ) \}  \\[15pt]
    \textrm{My work:  case 6:} \quad    &\Big\{\frac{P}{\Tr P}\Big\}  \\
    &\Big\{\frac{Q}{\Tr Q}\, ,\ \ke{\varphi} \br{\varphi}\Big\}  
\end{align*}
In the table, $\varrho_i, \sigma_i$ denote arbitrary states (from the same space). Within a case, ($\ddots$) denotes commuting states (emphasize they are all diagonal in the same basis). The pure states are denoted by $\ke{\varphi} \br{\varphi}$ as usual, and $P,\,Q$ stand for orthogonal projections.

\begin{remark}
The general unknown case (first row of the table) in the special setup, when the worst distinguishable pair is not from the same set, the conjecture is true because of \cite{KeLi}.
\end{remark}

Case 6 is my work, which is the subject of the next section, where we will introduce it in detail. At this point, we should only observe the following. All of the known special cases are in some sense ``nice''; e.g., having certain symmetries (for example, there is some underlying group structure). However case 6 is more ``asymmetrical'' in some sense. The reason why it is exciting regarding the conjecture, is revealed in the next section.

\subsection{Verifying a new special case (my work)}\label{sec:Ver}

In all of the above cases, analytical proofs show that the conjecture is true there. Numerical computations done in low dimensional cases also suggest that the conjecture is true. However, all of the known special cases are in some sense ``nice'', e.g., having certain symmetries. As we have mentioned earlier, this may make the statement of the conjecture true for them, for example, because of some underlying group structure. Numerical computations can also be misleading: the deviation from the conjectured value could be very small, and to get high precision asymptotic rates, one needs to consider high tensorial powers, which means that even if we start in low dimensions, one needs computations with extremely large matrices. Therefore, it could happen that in a less ``nice'' case the conjecture fails. This was precisely what happened in the well-known question of ``\textit{Superadditivity of communication capacity using entangled inputs}'' \cite{Hastings2009}, where a certain  natural conjecture was made, and for a long time many great researchers were attempting to prove it in vain. Numerical searches did not find counterexamples, and in all special cases that could be easily considered, the conjecture was true. However, in the end, the conjecture was shown to be false by a random counterexample. So the reason people could not find a counterexample was because the cases that are easy to deal with are just ``too nice''.   

Because of the above, one is motivated to find 
cases which on the one hand are as ``asymmetrical'' as possible (and, of course, do not follow from any of the known cases), yet still analytically computable. Following a suggestion of my supervisor, I will consider the two sets of states $\{\varrho\}$ and $\{\sigma_1,\sigma_2\}$ with the following assumptions:
\begin{itemize}
    \item[(1)] $\varrho$ and $\sigma_1$ are multiples of commuting projections,
    \item[(2)] $\sigma_2$ is pure,
    \item[(3)] none of these projections is contained in any of the other ones,
    \item[(4)] $\sigma_2$ does not commute with any of the other ones,
    \item[(5)] neither $\varrho$, nor $\sigma_1$ are pure.
\end{itemize}
\begin{remark}\label{re:ass}
Assumptions (1) and (2) are done to make the case computable, while (3), (4) and (5) serve to make it not fall in any of the known cases, and render it as ``asymmetrical'' as possible.
\end{remark}
As it turns out from the following theorem, the conjecture is also true in this new special case, which increases the likelihood that the conjecture is true in general.

\enlargethispage{-\baselineskip}
\begin{theorem}\label{th:mywork} 
Let $\varrho, \sigma_1$ and $\sigma_2$ as explained before. Then the limit $p_e(\vec \varrho, \{\vec\sigma_{1},\vec\sigma_{2}\})$ exists and  
\[
p_e(\vec \varrho, \{\vec\sigma_{1}, \vec\sigma_{2}\})=\max_i  p_e(\vec \varrho,\,\vec\sigma_i),
\]
where $\vec \varrho:=\{\varrho^{\otimes n}\}_{n\in \mathbb{N}}$ and
 $\vec \sigma_i:=\{\sigma_i^{\otimes n}\}_{n\in \mathbb{N}}$ for $i\in[r]$.
\end{theorem}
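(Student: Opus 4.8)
The plan is to prove the nontrivial inequality $\overline{p}_e(\vec\varrho,\{\vec\sigma_1,\vec\sigma_2\})\le\max_i p_e(\vec\varrho,\vec\sigma_i)$, the reverse being immediate from Proposition~\ref{pr:compineq}; combined with the quantum Chernoff bound (Theorem~\ref{th:Chernoff}), which gives $p_e(\vec\varrho,\vec\sigma_i)=-C(\varrho,\sigma_i)$, this reduces the whole statement to exhibiting, for each $n$, a test $T_n$ whose worst-case error decays at least at the rate $\min_i C(\varrho,\sigma_i)$. Writing $\varrho=P/\Tr P$, $\sigma_1=Q/\Tr Q$ with $PQ=QP$ and $\sigma_2=\ket{\varphi}\bra{\varphi}$ (assumptions (1)--(2)), I would first record the two divergences explicitly. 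Since $P^\alpha=P$ and $Q^{1-\alpha}=Q$ for $\alpha\in(0,1)$, the commuting pair gives $\Tr\varrho^\alpha\sigma_1^{1-\alpha}=\Tr(PQ)/\bigl((\Tr P)^\alpha(\Tr Q)^{1-\alpha}\bigr)$, hence $C(\varrho,\sigma_1)=\log\frac{\max(\Tr P,\Tr Q)}{\Tr PQ}$; and $\Tr\varrho^\alpha\sigma_2^{1-\alpha}=\braket{\varphi|P|\varphi}/(\Tr P)^\alpha$ gives $C(\varrho,\sigma_2)=\log\frac{\Tr P}{\braket{\varphi|P|\varphi}}$. Assumptions (3)--(5) guarantee $0<\Tr PQ<\min(\Tr P,\Tr Q)$ and $0<\braket{\varphi|P|\varphi}<1$, so both rates are finite and positive (the degenerate case $\Tr PQ=0$, in which $\varrho\perp\sigma_1$ and $C(\varrho,\sigma_1)=\infty$, is easier and handled separately).

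The central observation is that after tensoring all three states are, up to normalisation, \emph{projections}: $\varrho_n=P^{\otimes n}/(\Tr P)^n$ is maximally mixed on $\mathcal K_n:=\Image P^{\otimes n}$, while $\sigma_{1,n}=Q^{\otimes n}/(\Tr Q)^n$ and $\sigma_{2,n}=\ket{\Phi_n}\bra{\Phi_n}$ with $\Phi_n:=\varphi^{\otimes n}$, the last being a rank-one projection. I would therefore assemble $T_n$ from two ingredients: the optimal \emph{commuting} hard-threshold test $T^{(1)}_n$ for $\varrho^{\otimes n}$ versus $\sigma_1^{\otimes n}$ (a projection diagonal in the joint eigenbasis of $P^{\otimes n},Q^{\otimes n}$, with $T^{(1)}_n\le P^{\otimes n}$, one of its errors equal to $0$ and the other equal to $e^{-nC(\varrho,\sigma_1)}$), and the projection $I-\sigma_{2,n}$ that rejects $\varrho$ exactly along the pure alternative. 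The proposed composite test is the sandwich
\[
T_n:=(I-\sigma_{2,n})\,T^{(1)}_n\,(I-\sigma_{2,n}),
\]
which satisfies $0\le T_n\le I$. Its virtue is that the $\sigma_2$-error vanishes identically: because $\sigma_{2,n}$ is a projection, $\sigma_{2,n}(I-\sigma_{2,n})=0$ and thus $\Tr\sigma_{2,n}T_n=0$. The worst-case error therefore collapses to $\Tr\varrho_n(I-T_n)+\Tr\sigma_{1,n}T_n$, and it remains only to show both terms decay at rate at least $\min_i C(\varrho,\sigma_i)$.

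Expanding the sandwich and using that $\varrho_n$ is maximally mixed on $\mathcal K_n$ while $T^{(1)}_n$ commutes with $\sigma_{1,n}$, the leading terms reproduce the two (already controlled) binary errors $\Tr\varrho_n(I-T^{(1)}_n)$ and $\Tr\sigma_{1,n}T^{(1)}_n$, each at rate $C(\varrho,\sigma_1)\ge\min_iC$, whereas the remaining terms are \emph{cross terms} of the form $\braket{\Phi_n|X|\Phi_n}$ with $X$ assembled from $P^{\otimes n},Q^{\otimes n},T^{(1)}_n$. Bounding these is the main obstacle, and it is exactly where the non-commutativity of assumption (4) bites: $\sigma_{2,n}$ does not commute with $P^{\otimes n},Q^{\otimes n}$, which is precisely what makes the naive ``union of rejection regions'' estimate fail (the operator inequality $E\vee F\le E+F$ is false for non-commuting projections, even after tracing against a state) and what keeps the general conjecture open. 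Here, however, the relevant cross terms are governed by the overlaps $\braket{\varphi|P|\varphi}^n$, $\braket{\varphi|PQ|\varphi}^n$ and $\braket{\varphi|Q|\varphi}^n$, all strictly below $1$ by (3)--(4), hence exponentially small; for instance the type-I cross term is at most $\tfrac{2\braket{\Phi_n|T^{(1)}_n|\Phi_n}}{(\Tr P)^n}\le\tfrac{2\braket{\varphi|P|\varphi}^n}{(\Tr P)^n}=2\,e^{-nC(\varrho,\sigma_2)}$, decaying at rate $C(\varrho,\sigma_2)\ge\min_iC$, and the $\sigma_1$-cross terms are controlled by $\braket{\varphi|PQ|\varphi}^n\le\Tr(PQ)^{-1}\cdot(\dots)$ at a rate that I expect to verify is $\ge C(\varrho,\sigma_1)$. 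The heart of the proof is thus the bookkeeping confirming that \emph{every} cross term decays strictly faster than (or as fast as) the slower leading rate. Combining the leading estimates with these cross-term bounds yields $\Tr\varrho_n(I-T_n)+\Tr\sigma_{1,n}T_n\lesssim e^{-n\min_iC(\varrho,\sigma_i)}$, so $\overline{p}_e\le-\min_iC(\varrho,\sigma_i)=\max_i p_e(\vec\varrho,\vec\sigma_i)$; together with the trivial lower bound from Proposition~\ref{pr:compineq}, the limit exists and equals the asserted value.
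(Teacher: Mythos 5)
Your strategy genuinely differs from the paper's: you aim for an achievability proof via an explicit test, whereas the paper constructs no test at all --- it evaluates the upper bound $\overline{p}_e(\vec\varrho,\,\vec\sigma_1+\vec\sigma_2)$ from Proposition \ref{pr:compineq} \emph{exactly}, using the closed-form optimal error of Theorem \ref{th:P^*} for the binary problem against the sum $\sigma_1^{\otimes n}+\sigma_2^{\otimes n}$, and then an asymptotic eigenvalue analysis of the resulting $3\times3$ and $4\times4$ matrices. Unfortunately your route has a genuine gap, and it sits exactly where you yourself located ``the heart of the proof'': the cross-term bookkeeping is not just unfinished, it is false for the test you propose. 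Write $E_n:=\ke{\psi^{\otimes n}}\br{\psi^{\otimes n}}$ and consider the case $\Tr P>\Tr Q$ (i.e.\ $p<q$ in the paper's notation), where the optimal commuting test is $T^{(1)}_n=P^{\otimes n}-(PQ)^{\otimes n}$. Since $T^{(1)}_nQ^{\otimes n}=Q^{\otimes n}T^{(1)}_n=0$, every term in $\Tr\,\sigma_{1,n}(I-E_n)T^{(1)}_n(I-E_n)$ vanishes except the one coming from $E_nT^{(1)}_nE_n$, and one gets \emph{exactly}
\[
\Tr\,\sigma_{1,n}T_n \;=\; q^n\,\parr[\big]{t^n-r^n}\,s^n,
\]
which behaves like $(qts)^n$ whenever $r<t$. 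So the surviving $\sigma_1$-cross term is governed by the product of the \emph{separate} overlaps $s$ and $t$, not by $r=\braket{\psi,PQ\psi}$ as you anticipated, and it need not be anywhere near the target rate $\max\{R\min\{p,q\},\,pt\}^n$, because the target carries a factor $p=1/\Tr P$ that can be made tiny independently of $q,s,t$.

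Concretely, take $\Tr P=100$, $\Tr Q=2$, $\Tr PQ=1$ (so $p=10^{-2}$, $q=1/2$, $R=1$) and $\psi$ with components of squared norm $0.05,\ 0.85,\ 0.05,\ 0.05$ in the four subspaces $\Image(P-PQ)$, $\Image PQ$, $\Image(Q-PQ)$, $\Image(I-P-Q+PQ)$, so that $t=s=0.9$ and $r=0.85$; assumptions (1)--(5) all hold. The conjectured exponent corresponds to errors of order $\max\{Rp,\,pt\}^n=(10^{-2})^n$, but your test's $\sigma_1$-error is asymptotically $(0.5\cdot0.9\cdot0.9)^n=0.405^n$, exponentially larger; so the sandwich test does not achieve the conjectured rate. (Your type-I and $\sigma_2$ estimates are correct, and when $\Tr P\le\Tr Q$ the construction does work, since then $qts\le pt$; the failure is specific to the regime $\Tr P>\Tr Q$ with $\psi$ heavily overlapping both projections.) The structural reason is that the term $E_nT^{(1)}_nE_n$, which you must add back to keep $T_n\ge0$ after annihilating the $\sigma_2$-error, re-injects weight along $\psi^{\otimes n}$, and that vector can lie almost inside $\Image Q^{\otimes n}$. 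The optimal test for $\varrho_n$ versus $\sigma_{1,n}+\sigma_{2,n}$ --- in effect the spectral projection $\{p^nP^{\otimes n}-q^nQ^{\otimes n}-E_n>0\}$, whose error the paper computes through the trace-norm formula --- treats the two alternatives jointly and is not of your sandwich form; an achievability-style proof would need a test of that kind rather than a correction of the commuting test by $I-E_n$.
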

In the rest of the section we prove the above theorem.

\begin{remark}\label{re:th}
By Proposition \ref{pr:compineq}, the following is true (for any kind of states)
\[
\max_i  p_e(\vec \varrho,\,\vec\sigma_i) \leq  \underline{p}_e(\vec \varrho, \{\vec\sigma_{1},\, \vec\sigma_{2}\}) \leq \overline{p}_e(\vec \varrho, \{\vec\sigma_{1},\,\vec\sigma_{2}\})  \leq  \overline{p}_e(\vec \varrho,\, \vec\sigma_1+\vec\sigma_2).
\]
So, to prove the theorem, it is enough to show that the following is true (for the states explained before),
\begin{equation}\label{eq:enough}
\overline{p}_e(\vec \varrho,\, \vec\sigma_1+\vec\sigma_2)  = \max_i  p_e(\vec \varrho,\,\vec\sigma_i).
\end{equation}
\end{remark}

We will go through Propositions \ref{prop:RHS}-\ref{prop:B}, and after that we will give the proof of Theorem \ref{th:mywork}. First we introduce some notations for the states defined above, which we will use in the whole paper. For the two commuting states $\varrho$ and $\sigma_1$,

\begin{align*} 
\varrho:&=pP, \quad &p:=\frac{1}{\Tr P}, &&\\ 
\sigma_1:&=qQ,  \quad &q:=\frac{1}{\Tr Q},  &&\\
& &R:=\Tr PQ, 
\end{align*}
where $P$ and $Q$ are two commuting orthogonal projections (i.e., $P=P^2=P^*,\  Q=Q^2=Q^*$ and $PQ=QP$). For the pure state $\sigma_2$,

\begin{align*} 
&\sigma_2:=\ke{\psi} \br{\psi},  \quad &t&:=\inner{\psi, P \psi},  &&\\ 
& &s&:=\inner{\psi, Q \psi},    \\
& &r&:=\inner{\psi, PQ \psi},    
\end{align*}
where $\norm{\psi}=1$. Considering the assumptions (3), (4) and (5), we see that these parameters could be $p,q\in \{\frac{1}{n} : n=2,3,... \} \subset (0,1)$ and $R\in \{0,1,2,...\}$ and $t,s,r\in[0,1)$ and satisfying
\begin{align*}
1-t-s+r &\geq 0 \\
  t-r &\geq 0 \\
  s-r &\geq 0 
\end{align*}
This six parameters do not determine the three states, but however, (we will see) they are determine the optimal error probability, so the error exponent.

In the following proposition we calculate the RHS of \eqref{eq:enough} in terms of the parameters $p,q,t,s,r,R$.
\begin{proposition}\label{prop:RHS} 
Let $\rho, \sigma_1, \sigma_2$ be the states given as above, then 
\begin{equation}\label{eq:min}
    \max_i  p_e(\vec \varrho,\,\vec\sigma_i)
    =\log\Big(\max\Big\{R\min\{p,q\},pt\Big\}\Big)
\end{equation}
\end{proposition}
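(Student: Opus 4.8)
The plan is to reduce the left-hand side to two ordinary (non-composite) quantum Chernoff bounds and then carry out two explicit one-parameter minimizations. Each pair $(\varrho,\sigma_i)$ is a genuine binary problem, so Theorem~\ref{th:Chernoff} gives $p_e(\vec\varrho,\vec\sigma_i)=-C(\varrho,\sigma_i)$, and Definition~\ref{def:Chernoff} rewrites this, using monotonicity of $\log$, as
\[
p_e(\vec\varrho,\vec\sigma_i)=\min_{0\le\alpha\le1}\log\Tr\varrho^\alpha\sigma_i^{1-\alpha}=\log\min_{0\le\alpha\le1}\Tr\varrho^\alpha\sigma_i^{1-\alpha}.
\]
Since $\log$ is also monotone across the outer $\max_i$, the whole claim follows once I compute $m_i:=\min_{0\le\alpha\le1}\Tr\varrho^\alpha\sigma_i^{1-\alpha}$ for $i=1,2$ and verify $\max\{m_1,m_2\}=\max\{R\min\{p,q\},pt\}$.

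The key computational input is the functional calculus of projections: for a scaled orthogonal projection one has $(cP)^\beta=c^\beta P$ for every $\beta>0$. For $i=1$ both $\varrho=pP$ and $\sigma_1=qQ$ are scaled \emph{commuting} projections, so on $(0,1)$ one gets $\varrho^\alpha\sigma_1^{1-\alpha}=p^\alpha q^{1-\alpha}PQ$, whence $\Tr\varrho^\alpha\sigma_1^{1-\alpha}=R\,p^\alpha q^{1-\alpha}$ by the definition $R=\Tr PQ$. For $i=2$ the state $\sigma_2=\ke{\psi}\br{\psi}$ is a rank-one projection, so $\sigma_2^{1-\alpha}=\sigma_2$ on $(0,1)$ and $\Tr\varrho^\alpha\sigma_2^{1-\alpha}=p^\alpha\inner{\psi,P\psi}=p^\alpha t$.

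It then remains to minimize these two explicit functions over $[0,1]$. The first, $R\,p^\alpha q^{1-\alpha}$, has logarithm affine in $\alpha$, so its minimum is attained at an endpoint and equals $R\min\{p,q\}$. The second, $t\,p^\alpha$, is strictly decreasing because $p=1/\Tr P<1$ — this is exactly where assumption~(5) enters, since $\varrho$ not being pure forces $\Tr P\ge2$ — so its minimum is at $\alpha=1$ and equals $pt$. Taking the maximum and reinserting the $\log$ yields \eqref{eq:min}. The computation is essentially routine given the Chernoff bound; the only step that is not fully automatic is the behaviour at the endpoints $\alpha\in\{0,1\}$, where the functional calculus uses the support-projection convention ($X^0$ being the orthogonal projection onto $\Image X$). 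There one must check that the closed-interval minimum agrees with the limits of the integrands from the open interval, but since both integrands are continuous and extend to the stated endpoint values ($Rq$ and $t$ at $\alpha=0$, $Rp$ and $pt$ at $\alpha=1$), this is immediate, and the degenerate cases $R=0$ or $t=0$ are covered by the convention $\log 0=-\infty$.
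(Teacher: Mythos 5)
Your proof is correct and takes essentially the same route as the paper's: both apply Theorem~\ref{th:Chernoff} to each pair, use the functional calculus of scaled projections (and the commutativity of $P$ and $Q$) to reduce the traces to $R\,p^\alpha q^{1-\alpha}$ and $t\,p^\alpha$, minimize these explicitly over $\alpha\in[0,1]$, and conclude by monotonicity of $\log$. Your additional care with the endpoint convention for $X^0$ and the degenerate cases $R=0$, $t=0$ is a fine refinement but does not alter the argument.
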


\begin{proof}
By Theorem \ref{th:Chernoff} (and Definition \ref{th:Chernoff}), we can calculate the two error exponents $p_e(\vec \varrho,\,\vec\sigma_1)$ and $p_e(\vec \varrho,\,\vec\sigma_2)$ as
\begin{align*} 
p_e(\vec \varrho,\,\vec\sigma_1)
&=\min_{\alpha \in [0,1]} \log{\Tr\Big((pP)^\alpha(qQ)^{1-\alpha}\Big)}= \\
&=\log \Big(\Tr(PQ) \min_{\alpha \in [0,1]} p^\alpha q^{1-\alpha}\Big)
=\log \Big(R\min\{p,q\}\Big)
\end{align*}
and
\begin{align*} 
p_e(\vec \varrho,\,\vec\sigma_2)
&=\min_{\alpha \in [0,1]} \log{\Tr\Big((pP)^\alpha \ke{\psi} \br{\psi}^{1-\alpha}\Big)}= \\
&=\min_{\alpha \in [0,1]} \log \inner{\psi,(pP)^\alpha \psi }
=\log \Big(\inner{\psi,P \psi } \min_{\alpha \in [0,1]} p^\alpha \Big)
=\log (pt)
\end{align*}
and, since $\log$ is a monotone function, the Proposition follows.
\end{proof}

In the following proposition we calculate the LHS of \eqref{eq:enough} in terms of the parameters $p,q,t,s,r,R$.
\begin{proposition}\label{prop:LHS}  
In the case $r=0$,
\begin{equation}\label{eq:LHS_A}
    \overline{p}_e(\vec \varrho,\, \vec\sigma_1+\vec\sigma_2) 
    = \limsup_{n \to \infty} \frac{1}{n} \log \Bigg(2\parr[\Big]{R\min\{p,q\}}^n+1+p^n+q^n 
    - \norm[\big]{A(n)}_1   \Bigg),
\end{equation}
and in the case $r\neq 0$,
\begin{equation} \label{eq:LHS_B}
    \overline{p}_e(\vec \varrho,\, \vec\sigma_1+\vec\sigma_2) 
    = \limsup_{n \to \infty} \frac{1}{n} \log \Bigg(2\parr[\Big]{R\min\{p,q\}}^n+1+p^n+q^n
    +\abs{p^n-q^n} - \norm[\big]{B(n)}_1   \Bigg),
\end{equation}
where
\begin{equation}\label{eq:A}
    A(n)
    =\begin{pmatrix}
    t^n-p^n & \sqrt{(ts)^n} & \sqrt{t^n (1-t^n-s^n)}  \\
    \sqrt{(ts)^n}  & s^n+q^n & \sqrt{s^n (1-t^n-s^n)}  \\
     \sqrt{t^n (1-t^n-s^n)}  & \sqrt{s^n (1-t^n-s^n)}  &  1-t^n-s^n \\
    \end{pmatrix}
\end{equation}
and
\begin{equation}\label{eq:B}
    B(n)
    =\begin{pmatrix}
    t^n-r^n-p^n & \sqrt{(t^n-r^n)r^n} &  \sqrt{(t^n-r^n)(s^n-r^n)} & \sqrt{(t^n-r^n)(1-t^n-s^n+r^n)}   \\
        & r^n+q^n-p^n & \sqrt{r^n(s^n-r^n)} &  \sqrt{r^n(1-t^n-s^n+r^n)}  \\
       &   & s^n-r^n+q^n &  \sqrt{(s^n-r^n)(1-t^n-s^n+r^n)}    \\
         &     &    &  1-t^n-s^n+r^n     \\
    \end{pmatrix}, 
\end{equation}
where $B(n)$ is also a symmetric matrix (but we only write down the upper triangular part, due to lack of space).
\end{proposition}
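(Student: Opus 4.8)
The plan is to reduce the computation to the binary optimal-error formula of Theorem~\ref{th:P^*} and then to diagonalise the resulting self-adjoint operator by treating it as a rank-one perturbation of an operator whose spectrum is known explicitly. Writing $A_1:=\varrho^{\otimes n}=p^nP^{\otimes n}$ and $A_2:=\sigma_1^{\otimes n}+\sigma_2^{\otimes n}=q^nQ^{\otimes n}+\ke{\psi^{\otimes n}}\br{\psi^{\otimes n}}$, Theorem~\ref{th:P^*} gives
\[
P_e^*(\varrho_n,\,\sigma_{1,n}+\sigma_{2,n}) = \tfrac12\bigl(\Tr A_1 + \Tr A_2 - \norm{A_1-A_2}_1\bigr).
\]
Since $\varrho,\sigma_1,\sigma_2$ are states we have $\Tr A_1=1$ and $\Tr A_2=2$, and the factor $\tfrac12$ is immaterial after applying $\frac1n\log(\cdot)$ and $\limsup$; thus the whole task reduces to computing the trace norm of
\[
X_n := \varrho^{\otimes n}-\sigma_1^{\otimes n}-\sigma_2^{\otimes n}= D_n-\ke{\Psi}\br{\Psi},\qquad D_n:=p^nP^{\otimes n}-q^nQ^{\otimes n},\quad \Psi:=\psi^{\otimes n}.
\]

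The key is that $P$ and $Q$ commute. Decomposing $\mathcal H$ into the four joint eigenspaces of $(P,Q)$ and taking tensor powers makes $D_n$ block-diagonal, constant on each string block, with eigenvalues $p^n-q^n$ on the block where every factor lies in $\Image PQ$ (dimension $R^n$), $p^n$ where $P^{\otimes n}=I,\ Q^{\otimes n}=0$ (dimension $p^{-n}-R^n$), $-q^n$ symmetrically (dimension $q^{-n}-R^n$), and $0$ on the remainder. Splitting $\psi=\psi_{11}+\psi_{10}+\psi_{01}+\psi_{00}$ along the single-copy joint eigenspaces, the stated inequalities are precisely $\norm{\psi_{11}}^2=r$, $\norm{\psi_{10}}^2=t-r$, $\norm{\psi_{01}}^2=s-r$, $\norm{\psi_{00}}^2=1-t-s+r\ge 0$. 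By multiplicativity of the inner product under tensor powers, the weights $c_\mu:=\norm{E_\mu\Psi}^2$ of $\Psi$ in the four eigenspaces of $D_n$ come out as $r^n,\ t^n-r^n,\ s^n-r^n,\ 1-t^n-s^n+r^n$.

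Next I would invoke the elementary rank-one perturbation lemma: $X_n=D_n-\ke{\Psi}\br{\Psi}$ leaves invariant both $V:=\vspan\{E_\mu\Psi\}$ and $V^\perp$; on $V^\perp$ it acts as $D_n$, so each eigenvalue $\mu$ survives with multiplicity $\dim\Image E_\mu$ reduced by one whenever $c_\mu\ne 0$, while on $V$ it acts as the small matrix $\diag(\mu)-(\sqrt{c_\mu c_\nu})_{\mu\nu}$. After reordering the eigenspaces this small matrix is exactly $-B(n)$ in the case $r\neq0$; when $r=0$ the weight $c_{p^n-q^n}=r^n$ vanishes, the $\mu=p^n-q^n$ block decouples entirely, and the reduced matrix drops to the $3\times3$ matrix $-A(n)$. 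Hence $\norm{X_n}_1$ equals $\norm{B(n)}_1$ (resp.\ $\norm{A(n)}_1$) plus the sum of the absolute non-interacting eigenvalues, which is computed from the block dimensions above. Collecting that non-interacting part, using the identity $a+b-\abs{a-b}=2\min\{a,b\}$ to produce the term $2(R\min\{p,q\})^n$, and substituting back into $\tfrac12(\Tr A_1+\Tr A_2-\norm{X_n}_1)$ yields exactly the two claimed expressions inside the $\limsup$.

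The main obstacle I anticipate is the bookkeeping in the rank-one reduction: one must check that $V$ is genuinely the span of the four (orthogonal, hence independent when nonzero) vectors $E_\mu\Psi$, that each contributing eigenspace loses exactly one dimension while a vanishing $c_\mu$ leaves its eigenspace fully non-interacting, and that the reduced matrix reproduces $A(n),B(n)$ up to an immaterial sign and basis reordering. The degenerate coincidences (such as $p=q$, $t=r$, or $s=r$) must be shown to be absorbed harmlessly: each merely transfers one dimension between the interacting block and the explicit non-interacting count, and since the corresponding entry of $B(n)$ then equals the pinned eigenvalue $p^n$, $-q^n$, or $p^n-q^n$ exactly, the value of $\norm{X_n}_1$ — and therefore the final formula — is unchanged. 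This is precisely what makes the dichotomy genuinely governed by $r=0$ versus $r\neq0$ alone.
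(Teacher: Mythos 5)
Your proposal is correct and takes essentially the same route as the paper: the paper likewise reduces via Theorem~\ref{th:P^*} to computing $3-\norm{\varrho^{\otimes n}-\sigma_1^{\otimes n}-\sigma_2^{\otimes n}}_1$, decomposes $\mathcal{H}^{\otimes n}$ into the same four joint eigenspaces of $(P^{\otimes n},Q^{\otimes n})$ with the same weights $t^n-r^n,\,r^n,\,s^n-r^n,\,1-t^n-s^n+r^n$ for $\psi^{\otimes n}$, and isolates the same small interacting block plus explicitly counted diagonal eigenvalues --- it just does so by writing out matrices in a basis whose leading vectors are aligned with the components of $\psi^{\otimes n}$, which is precisely your invariant subspace $V=\vspan\{E_\mu\Psi\}$ made concrete. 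Your rank-one-perturbation packaging and your treatment of the degenerate coincidences (vanishing weights, $p=q$) are sound, so the two proofs differ only in presentation.
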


\begin{proof}
From Definition \ref{def:p_e} and Theorem \ref{th:P^*}, it follows that
\begin{align}\label{eq:Ap}
\overline{p}_e(\vec \varrho,\, \vec\sigma_1+\vec\sigma_2) 
 &=\limsup_{n\to \infty} \frac{1}{n} \log P_e^*(\varrho_n,\,\sigma_{1,n}+\sigma_{2,n})= \nonumber\\
 &=\limsup_{n \to \infty} \frac{1}{n} \log \Bigg(3-\norm[\Bigg]{\varrho^{\otimes n}-(\sigma_1^{\otimes n}+\sigma_2^{\otimes n}) }_1 \Bigg)
\end{align}
In the last step the $\frac{1}{2}$ factor (from \eqref{eq:P^*}) disappears by the limit (cf. Remark \ref{re:without}).

In the following we will choose a convenient basis in which the 1-norm is computable, that is, the matrices of the operators have the simplest form. Observe that the commuting projections $P$ and $Q$ partition the $\mathcal{H}$ Hilbert space into four orthogonal subspaces
\begin{align*}
    &\Image \parr[\Big]{P -PQ}, \\
    &\Image PQ, \\
    &\Image \parr[\Big]{Q -PQ}, \\
    &\Image \parr[\Big]{I-P -Q +PQ}.
\end{align*}
The vector $\psi$ can be uniquely decomposed into the sum of vectors from the four subspaces, and the components are proportional to $\sqrt{t-r}$, $\sqrt{r}$, $\sqrt{s-r}$ and $\sqrt{1-t-s+r}$.
Since the tensor products of the commuting projections $P$ and $Q$ are also commuting projections $P^{\otimes n}$ and $Q^{\otimes n}$, they partition the $\mathcal{H}^{\otimes n}$ tensor product space again into four orthogonal subspaces
\begin{align*}
    V_P:&=\Image \parr[\Big]{P^{\otimes n}-(PQ)^{\otimes n}},\\
    V_{PQ}:&=\Image (PQ)^{\otimes n},\\
    V_Q:&=\Image \parr[\Big]{Q^{\otimes n}-(PQ)^{\otimes n}},\\
    V:&=\Image \parr[\Big]{I-P^{\otimes n}-Q^{\otimes n}+(PQ)^{\otimes n}}.
\end{align*}
The vector $\psi^{\otimes n}$ can be uniquely decomposed into the sum of vectors from the four subspaces, and the components are proportional to $\sqrt{t^n-r^n}, \sqrt{r^n}, \sqrt{s^n-r^n}$ and $\sqrt{1-t^n-s^n+r^n}$.
\begin{remark}
We could observe that the tensor product preserves the relationship of these states, so (we will see that) the 1-norm of the $n$-th tensor power can be calculated relatively easily. This is the reason why we state Theorem \ref{th:mywork} for such states.
\end{remark}
We will choose such a basis in which the matrices of the two projections are diagonal, and the matrix of the pure state is as simple as possible. We will consider the $r=0$ and $r\neq0$ case separately. 
\begin{remark}
Since the two orthogonal projections $P$ and $Q$ are commuting, PQ is an orthogonal projection too, so positive, therefore $\inner{\psi, PQ \psi}=0$ (that is the $r=0$ case) is equivalent to $PQ\psi=0$.
\end{remark}

\emph{Case} $r=0$: (that is $PQ\psi=0$) In this case $\psi^{\otimes n}$ could have components only in $V_P, V_Q$ and $V$, so for a fixed $n$, we choose the orthonormal basis $\beta:=\{b_1,b_2,b_3,...\}$ in $\mathcal{H}^{\otimes n}$ in the following way. 
Roughly speaking, we choose $b_1,b_2,b_3$ being parallel to the components of $\psi^{\otimes n}$ in $V_P, V_Q$ and $V$ (which are at least one dimensional, because of assumptions (3), (4) and (5)). In more detail, we choose $b_1\in V_P$ such that $\braket{b_1,\, \psi^{\otimes n}}=\sqrt{t^n}$.
(If $\psi^{\otimes n}$ has nonzero component in $V_P$, then $b_1$ is proportional to it, if the component is the null vector (i.e., $t=0$) then $b_1\in V_P$ is an arbitrary unit vector.) We choose $b_2$ and $b_3$ in the similar way. Then we complete these three vectors to an orthonormal basis

\begin{align*} 
&b_1\ \in V_P \quad \textrm{such that}\ \braket{b_1,\, \psi^{\otimes n}}=\sqrt{t^n},  \\ 
&b_2\ \in V_Q \quad \textrm{such that}\ \braket{b_2,\, \psi^{\otimes n}}=\sqrt{s^n},  \\ 
&b_3\ \in V \quad \textrm{such that}\ \braket{b_3,\, \psi^{\otimes n}}=\sqrt{1-t^n-s^n},  \\ 
&\vdots   \\
&b_i\ \in V_P,  \\ 
&\vdots   \\ 
&b_j\ \in V_{PQ},   \\ 
&\vdots   \\ 
&b_k\ \in V_Q,  \\ 
&\vdots    \\
&b_l\ \in V,  \\ 
&\vdots  
\end{align*}
where the order of the subspaces is important. In this basis, $\psi^{\otimes n}$ has the following coordinate vector 
\begin{align*} 
[\psi^{\otimes n}]_{\beta}
=\begin{pmatrix}
\sqrt{t^n} \\
\sqrt{s^n} \\
\sqrt{1-t^n-s^n} \\ 
0 \\
\vdots   \\ 
0 \\
\end{pmatrix},
\end{align*}
and the three tensor product states have the following matrices (for the easier readability we will use the same notation for the operator and its matrix)
\begin{align*} 
&\varrho^{\otimes n}=p^nP^{\otimes n}=p^n\diag(1,0,0,\overbrace{1,......,\underbrace{1,...,1}_{R^n}}^{1/p^n-1},0,...,0,0,...,0) &&\\ 
&\sigma_1^{\otimes n}=q^nQ^{\otimes n}=q^n\diag(0,1,0,0,...,0,\underbrace{1,...,1,1,...,1}_{1/q^n-1},0,...,0) &&\\ 
\end{align*}
\begin{align*} 
&\sigma_2^{\otimes n}=\ke{\psi^{\otimes n}} \br{\psi^{\otimes n}} 
=\begin{pmatrix}
t^n & \sqrt{t^n s^n} & \sqrt{t^n (1-t^n-s^n)} & \\
\sqrt{t^n s^n}  & s^n & \sqrt{s^n (1-t^n-s^n)} & \\
 \sqrt{t^n (1-t^n-s^n)}  & \sqrt{s^n (1-t^n-s^n)}  &  1-t^n-s^n & \\
    &    &   & 0 \\
   &    &   & & \ddots \\
      &    &   & &  & 0 \\
\end{pmatrix}. &&\\ 
\end{align*}
So the only non-diagonal part is the 3-by-3 submatrix in the top left corner. In this basis we can easily calculate the 1-norm
 \begin{align*} 
\norm[\Bigg]{\varrho^{\otimes n}-(\sigma_1^{\otimes n}+\sigma_2^{\otimes n}) }_1 &= p^n\parr[\Big]{\frac{1}{p^n}-1-R^n}+q^n\parr[\Big]{\frac{1}{q^n}-1-R^n}+\abs[\big]{p^n-q^n}R^n+\norm[\big]{A(n)}_1  \\
&=2-p^n-q^n+R^n\parr[\big]{-p^n-q^n+\abs[\big]{p^n-q^n}}+\norm[\big]{A(n)}_1, 
\end{align*}
where $A(n)$ is the 3-by-3 matrix in \eqref{eq:A}. Now we rewrite $-p^n-q^n+\abs[\big]{p^n-q^n}=-2(\min\{p,q\})^n$, and continue the equation \eqref{eq:Ap}, we get 
\begin{equation}
     \overline{p}_e(\vec \varrho,\, \vec\sigma_1+\vec\sigma_2) 
    =\limsup_{n \to \infty} \frac{1}{n} \log \Bigg(1+p^n+q^n 
    +2\parr[\Big]{R\min\{p,q\}}^n 
    - \norm[\big]{A(n)}_1   \Bigg),
\end{equation}
which is equal to \eqref{eq:LHS_A}, so we finished the proof in the case $r=0$.

\emph{Case} $r\neq0$: (that is, $PQ\psi\neq0$) The proof is the same as in the previous case, the only difference is that this time $\psi^{\otimes n}$ has component in $V_{PQ}$. So the same procedure as before results the following basis
\begin{align*} 
&b_1\ \in V_P \quad \textrm{such that}\ \braket{b_1,\, \psi^{\otimes n}}=\sqrt{t^n-r^n},  \\ 
&b_2\ \in V_{PQ} \quad \textrm{such that}\ \braket{b_2,\, \psi^{\otimes n}}=\sqrt{r^n},  \\ 
&b_2\ \in V_Q \quad \textrm{such that}\ \braket{b_2,\, \psi^{\otimes n}}=\sqrt{s^n-r^n},  \\ 
&b_3\ \in V \quad \textrm{such that}\ \braket{b_3,\, \psi^{\otimes n}}=\sqrt{1-t^n-s^n+r^n},  \\ 
&\vdots   \\
&b_i\ \in V_P,  \\ 
&\vdots   \\ 
&b_j\ \in V_{PQ},   \\ 
&\vdots   \\ 
&b_k\ \in V_Q,  \\ 
&\vdots    \\
&b_l\ \in V,  \\ 
&\vdots  
\end{align*}
where the order of the subspaces is important. In this basi, $\psi^{\otimes n}$ has the following coordinate vector 
\begin{align*} 
[\psi^{\otimes n}]_{\beta}
=\begin{pmatrix}
\sqrt{t^n-r^n} \\
\sqrt{r^n} \\
\sqrt{s^n-r^n} \\
\sqrt{1-t^n-s^n+r^n} \\ 
0 \\
\vdots   \\ 
0 \\
\end{pmatrix},
\end{align*}
and the three tensor product states have the following matrices (for the easier readability we will use the same notion for the operator and its matrix)
\begin{align*} 
&\varrho^{\otimes n}=p^nP^{\otimes n}=p^n\diag(1,1,0,0,\overbrace{1,...,1,\underbrace{1,...,1}_{R^n-1}}^{1/p^n-2},0,...,0,0,...,0), &&\\ 
&\sigma_1^{\otimes n}=q^nQ^{\otimes n}=q^n\diag(0,1,1,0,0,...,0,\underbrace{1,...,1,1,...,1}_{1/q^n-2},0,...,0), &&\\ 
\end{align*}
\begin{align*} 
&\sigma_2^{\otimes n}=\ke{\psi^{\otimes n}} \br{\psi^{\otimes n}}=   \\
=&\begin{pmatrix}
    t^n-r^n-p^n & \sqrt{(t^n-r^n)r^n} &  \sqrt{(t^n-r^n)(s^n-r^n)} & \sqrt{(t^n-r^n)(1-t^n-s^n+r^n)}  & & \\
        & r^n+q^n-p^n & \sqrt{r^n(s^n-r^n)} &  \sqrt{r^n(1-t^n-s^n+r^n)}  & & &\\
       &   & s^n-r^n+q^n &  \sqrt{(s^n-r^n)(1-t^n-s^n+r^n)}  & & & \\
         &     &    &    1-t^n-s^n+r^n & & & \\
         &     &    &     & 0 & & \\
         &     &    &     & & \ddots & \\
         &     &    &     & &   & 0\\
    \end{pmatrix}, 
\end{align*}
where the latter is a symmetric matrix (and we only write down the upper triangular part due to lack of space). So the only non-diagonal part is the 4-by-4 submatrix in the top left corner. In this basis we can easily calculate the 1-norm.
 \begin{align*} 
\norm[\Bigg]{\varrho^{\otimes n}-(\sigma_1^{\otimes n}+\sigma_2^{\otimes n}) }_1 &= p^n\parr[\Big]{\frac{1}{p^n}-1-R^n}+q^n\parr[\Big]{\frac{1}{q^n}-1-R^n}+\abs[\big]{p^n-q^n}(R^n-1)+\norm[\big]{B(n)}_1  \\
&=2-p^n-q^n+R^n\parr[\big]{-p^n-q^n+\abs[\big]{p^n-q^n}}-\abs[\big]{p^n-q^n}+\norm[\big]{B(n)}_1  
\end{align*}
where the $B(n)$ is the 4-by-4 matrix in \eqref{eq:B}. Now we rewrite $-p^n-q^n+\abs[\big]{p^n-q^n}=-2(\min\{p,q\})^n$, and continue the equation \eqref{eq:Ap}, we get 
\begin{equation}
     p_e(\vec \varrho,\, \vec\sigma_1+\vec\sigma_2) 
    =\limsup_{n \to \infty} \frac{1}{n} \log \Bigg(1+p^n+q^n+\abs[\big]{p^n-q^n} 
    +2\parr[\Big]{R\min\{p,q\}}^n 
    - \norm[\big]{B(n)}_1   \Bigg),
\end{equation}
which is equal to \eqref{eq:LHS_B}, so we finished the proof in the case $r\neq 0$. So we finished the proof of Proposition \ref{prop:LHS}.
\end{proof}

In the following two proposition we calculate the 1-norm of $A(n)$ and $B(n)$ (which appear in Proposition \ref{prop:LHS}) in terms of the parameters $p,q,t,s,r$.
\begin{proposition}\label{prop:A} 
Let $A(n)$ be the matrix given in Proposition \ref{prop:LHS}, then
\begin{equation}\label{eq:3norm}
\norm{A(n)}_1=1+p^n+q^n-2(pt)^n+o((pt)^n).
\end{equation}
\end{proposition}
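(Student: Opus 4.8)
The plan is to diagonalize the $3\times 3$ matrix $A(n)$ and evaluate $\norm{A(n)}_1=\Tr\abs{A(n)}$ by controlling its three (real) eigenvalues asymptotically. The first step is to recognize the structure of $A(n)$. Since $t^n+s^n+(1-t^n-s^n)=1$, the vector $\phi$ with coordinates $\parr{\sqrt{t^n},\sqrt{s^n},\sqrt{1-t^n-s^n}}$ is a unit vector, and one checks directly that
\[
A(n)=\ke{\phi}\br{\phi}+\diag\parr{-p^n,\,q^n,\,0};
\]
the rank-one projection $\ke{\phi}\br{\phi}$ reproduces every off-diagonal entry together with the diagonal $\parr{t^n,s^n,1-t^n-s^n}$, and the two remaining diagonal shifts are $-p^n$ and $+q^n$. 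Thus $A(n)$ is a rank-one projection perturbed by a diagonal matrix whose entries tend to $0$.

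Next I would determine the signs of the eigenvalues. Two symmetric functions are immediate: $\Tr A(n)=1-p^n+q^n$, and a short cofactor expansion, in which the cross terms cancel, gives the clean determinant
\[
\det A(n)=-\,p^nq^n\parr{1-t^n-s^n}.
\]
For large $n$ this is strictly negative, so the number of negative eigenvalues is odd; and since $A(n)\to\diag(0,0,1)$ entrywise, its largest eigenvalue tends to $1>0$, which excludes three negative eigenvalues. Hence, for all large $n$, $A(n)$ has exactly one negative eigenvalue $\lambda_-$ and two positive ones, so that
\[
\norm{A(n)}_1=\Tr A(n)-2\lambda_-=\parr{1-p^n+q^n}-2\lambda_-.
\]
Everything now reduces to locating $\lambda_-$ to precision $o((pt)^n)$.

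For this I would use the secular equation furnished by the matrix determinant lemma: writing $A(n)-\lambda I=\diag(-p^n-\lambda,\,q^n-\lambda,\,-\lambda)+\ke{\phi}\br{\phi}$, the condition $\det(A(n)-\lambda I)=0$ is equivalent, for $\lambda\notin\{-p^n,q^n,0\}$, to
\[
1-\frac{t^n}{p^n+\lambda}+\frac{s^n}{q^n-\lambda}-\frac{1-t^n-s^n}{\lambda}=0.
\]
Since $\det A(n)<0$ while $\det\parr{A(n)+p^nI}=(pt)^n(p^n+q^n)>0$, the map $\lambda\mapsto\det(A(n)-\lambda I)$ changes sign on $(-p^n,0)$, so $\lambda_-\in(-p^n,0)$. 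Writing $\lambda_-=-p^n+\delta$ with $\delta\in(0,p^n)$, the secular equation becomes
\[
\frac{t^n}{\delta}=1+\frac{s^n}{p^n+q^n-\delta}+\frac{1-t^n-s^n}{p^n-\delta}.
\]
Because every term on the right is positive and $1-t^n-s^n\to1$, I first get the crude bound $\delta<(pt)^n/(1-t^n-s^n)\le 2(pt)^n$ for large $n$, so in particular $\delta\ll p^n$. Feeding this back, the first two terms on the right are of lower order and the right-hand side equals $\frac{1-t^n-s^n}{p^n-\delta}(1+o(1))=\frac1{p^n}(1+o(1))$, whence $\delta=p^nt^n(1+o(1))=(pt)^n+o((pt)^n)$. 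Therefore $\lambda_-=-p^n+(pt)^n+o((pt)^n)$, and substituting,
\[
\norm{A(n)}_1=\parr{1-p^n+q^n}-2\lambda_-=1+p^n+q^n-2(pt)^n+o((pt)^n),
\]
as claimed.

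The hard part is not any single computation but the uniform error control $o((pt)^n)$ across all parameter regimes: since $p,q,t,s$ may be in any relative order, a priori a term such as $(qs)^n$ could swamp $(pt)^n$. The point to establish carefully is that the $q^n,s^n$-dependence cancels out of the leading balance — it survives in the determinant only through the harmless prefactor $p^nq^n$, and in the secular equation only through the terms I bound as lower order — so that the relevant correction is governed solely by the ratios $\delta/p^n\asymp t^n\to0$ and $1/(1-t^n-s^n)\to1$. Verifying that each neglected quantity is a genuine $o((pt)^n)$, rather than merely $o(p^n)$, is the crux of the argument.
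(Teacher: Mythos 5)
Your proof is correct, but it takes a genuinely different route from the paper's. The paper works directly with the characteristic cubic of $A(n)$: it first computes $\Tr A(n)=1-p^n+q^n>0$ and $\det A(n)=-p^nq^n(1-t^n-s^n)<0$ to isolate a single negative eigenvalue (exactly as you do), but it then locates that eigenvalue by a twice-applied ``factorising trick'': since the roots tend to $0,0,1$, it writes $\lambda_3=1+\Delta$, divides the cubic approximately by $(\lambda-1)$ with a controlled remainder to get $\lambda_3=1+o(p^n+q^n)$, uses Vieta's formulas to conclude $\lambda_1=-p^n+o(p^n)$ and $\lambda_2=q^n+o(q^n)$, and then repeats the trick around $-p^n$ to reach $\lambda_1=-p^n+(pt)^n+o((pt)^n)$. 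You instead exploit the structural decomposition $A(n)=\ket{\phi}\bra{\phi}+\diag(-p^n,q^n,0)$, which the paper's proof never uses explicitly, and you pin down the negative eigenvalue through the secular equation coming from the matrix determinant lemma, with the interval localization $\lambda_-\in(-p^n,0)$ obtained from a sign change of $\det(A(n)-\lambda I)$. Your route buys cleaner and more transparent error control (every term in the secular equation has a definite sign, so the crude bound $\delta\le 2(pt)^n$ and the subsequent bootstrap are immediate), and it generalizes directly to the quartic case of Proposition \ref{prop:B}, since $B(n)$ is likewise rank-one-plus-diagonal; the paper's route is more elementary, staying entirely inside polynomial manipulations that it can reuse verbatim for $B(n)$. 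One small caveat: the parameter ranges stated in the paper allow $t=0$ (the paper treats $t=0$ and $s=0$ separately as ``extreme cases''), and your sign-change step fails there because $\det(A(n)+p^nI)=(pt)^n(p^n+q^n)=0$ rather than $>0$; you should add the one-line observation that for $t=0$ the first row and column of $A(n)$ reduce to $(-p^n,0,0)$, so $\lambda_-=-p^n$ exactly and \eqref{eq:3norm} holds trivially, while for $t>0$ your argument goes through unchanged.
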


\begin{proof} 
\begin{remark}\label{re:notation}
In this proof for the easier readability we will use the slight abuse of notation 
\begin{equation}
   p:=p(n):=p^n ,
\end{equation}
and the same for $q,t, s$. For example, when we write $o(p)$, it means $o(p^n)$ as $n \to \infty$. In this notation $p<q$ means that $p^n<q^n$ (i.e., $p^n=o(q^n)$) which nicely coincides with the ordinary meaning (that is, $p<q$ as numbers). Moreover we will use the convention in Remark \ref{re:ordo}.
\end{remark}

First we consider the case when the parameters $t,s\in(0,1)$. (At the end of the proof we will consider the extreme cases when $t=0$ or $s=0$.)
The 1-norm of $A(n)$ is the sum of the absolute value of the eigenvalues.
Since we know the sum of the eigenvalues ($\Tr A(n)$) it is enough to find out the negative ones to calculate the 1-norm (for the easier readability we do not denote the $n$ dependence of the eigenvalues, i.e., $\lambda_i:=\lambda_i(n)$). We start with the following trivial formulas
 \begin{align} 
 \lambda_1+\lambda_2+\lambda_3&=\Tr (A(n)) =1+q-p >0  \label{sum3}\\ 
 \lambda_1\lambda_2\lambda_3&=\det (A(n)) =-(p q)(1-t-s) <0.     \label{prod3}
\end{align}
It follows that $A$ has two positive $\lambda_2,\lambda_3$ and one negative $\lambda_{1}$ eigenvalue, so
 \begin{align}\label{eq:lambdaminus}
\norm{A(n)}_1&=\abs{\lambda_2}+\abs{\lambda_3}+\abs{\lambda_{1}}= && \nonumber \\ 
 &=\lambda_2+\lambda_3-\lambda_{1}=\Tr (A(n)) -2\lambda_{1}=1+q-p-2\lambda_{1} ,&& 
\end{align}
thus, in the following the goal is to find out $\lambda_-$ (in $o(pt)$ precision). The eigenvalues of $A(n)$ are the roots of its characteristic polynomial,
\begin{align}  
f_n(\lambda):&=-\det(A(n)-\lambda I),
\end{align}
which gives the following cubic equation
\begin{align}   \label{eq:char3}
\lambda^3-(1+q-p)\lambda^2+(-p+q-pq-qs+pt)\lambda+(pq)(1-t-s)=0. 
\end{align}
Of course, we could solve the cubic equation exactly, but this would be a much more difficult way. Instead of that, we take advantage of that we only need the eigenvalue in the order of $o(pt)$ precision. So in the following we are working to get know $\lambda_-$ in the order of $o(pt)$ precision. Observe that $f_n(\lambda) \rightarrow \lambda^2 (\lambda-1)$, therefore
\begin{align*} 
&\lambda_1 \to 0,  \\ 
&\lambda_2 \to 0,   \\ 
&\lambda_3 \to 1,   
\end{align*}
so $\lambda_3=1+o(1)$. 

In the first step, we will improve the order of this approximation. The main idea here is that $1$ is ``almost'' a root of the cubic equation \eqref{eq:char3}, thus it can ``almost'' be factorising by $(\lambda_3-1)$ aside from the term on the RHS (remainder), which is ``very'' fast decreasing, 
\[
(\lambda_3-1)\Big(\lambda_3^2+(p-q)\lambda_3 - pq-qs+pt\Big)=+qs(1+p)+pt(-1+q).
\]
We can write $\lambda_3=1+\Delta$ where $\Delta=o(1)$, with this
\[
\Delta\Big((1+\Delta)^2+(p-q)(1+\Delta) -pq-qs+pt\Big)=o(p+q)
\]
Since $\Delta=o(1)$, the second factor on the LHS is $\sim 1$, it follows that $\Delta=o(p+q)$ so we get that
\begin{equation}
\lambda_3=1+o(p+q).
\end{equation}
We will use the above procedure several times and will refer to it as the ``factorising trick''. 

Now we turn our focus to the other two eigenvalues. Using the above result about $\lambda_3$ by \eqref{sum3} and \eqref{prod3}, it follows that
\begin{align}\label{eq:3sum}
    \lambda_1+\lambda_2&=q-p+o(p+q),  \\ 
    \lambda_1\lambda_2 &\sim -pq .  \label{eq:3prod}
\end{align}
From this, it is easy to see that 
\begin{align}
    \lambda_1&=-p+o(p),  \\
    \lambda_2 &=q+o(q).
\end{align}
(For example, let us see the case $q>p$: Without loss of generality we assume that $\abs{\lambda_1} \leq \abs{\lambda_2}$. From \eqref{eq:3sum} it follows that $\lambda_2 \approx q$, then it follows from \eqref{eq:3prod} that $\lambda_1 \approx -p$. So $\lambda_1=o(\lambda_2)$ therefore from \eqref{eq:3sum} it follows that $\lambda_2 \sim q$, then from \eqref{eq:3prod}   $\lambda_1 \sim -p$. The $q=p$ and $q<p$ cases can be seen in the same way.)

Now the only thing left is to improve the order of the above approximation of the negative eigenvalue $\lambda_1$ to $o(pt)$. For this we will use the ``factorising trick'' again, that is, $-p$ is ``almost'' a root of the cubic equation \eqref{eq:char3}, thus it  ``almost'' can be factorising by $(\lambda_1+p)$ aside from the term on the RHS (remainder) which is ``very'' fast decreasing, 
\[
(\lambda_1+p)\Big(\lambda_1^2-(1+q)\lambda_1 +q-qs+pt\Big)=pt(p+q).
\]
We can write $\lambda_1=-p+\Delta$, where $\Delta=o(p)$, with this,
\[
\Delta\Big((\Delta-p)^2-(1+q)\Delta+(q+p) +pq-qs+pt\Big)=pt(p+q).
\]
Since $\Delta=o(p)$, the second factor on the LHS is $\sim (q+p)$, it follows that $\Delta\sim pt$, so we get that
\begin{equation}
\lambda_1=-p+pt+o(pt).
\end{equation}
Now we use \eqref{eq:lambdaminus} to conclude that
\[
\norm{A(n)}_1=1+p+q-2pt+o(pt) ,
\]
which finishes the proof in the case when the parameters $t,s\in(0,1)$.

Now let's deal with the extreme cases. When $s=0$, the above argument works. When $t=0$, the quadratic equation can factorise,   
\[
(p + \lambda) (\lambda^2-\lambda(1+q) +q(1- s))=0,
\]
that is, the negative eigenvalue is exactly $\lambda_1=-p$, therefore $\norm{A(n)}_1=1+p+q$ which coincides with \eqref{eq:3norm} (if we substitute $t=0$ to the formula).
\end{proof}

\begin{proposition}\label{prop:B} 
Let $B(n)$ be the matrix given in Proposition \ref{prop:LHS} and $r\neq0$. \\If $pt<q$, then
\begin{equation}\label{eq:4norm}
\norm{B(n)}_1=1+p^n+q^n+\abs{p^n-q^n} +o((\min\{p,q\})^n).
\end{equation}
If $pt\geq q$, then
\begin{equation}\label{eq:4norm_pt}
\norm{B(n)}_1=1+p^n+q^n+\abs{p^n-q^n}-2(pt)^n+o((pt)^n).
\end{equation}
\end{proposition}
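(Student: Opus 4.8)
The plan is to analyse $B(n)$ by the same route used for $A(n)$ in Proposition \ref{prop:A}, keeping the shorthand of Remark \ref{re:notation} ($p:=p^n$, and likewise for $q,t,s,r$). The starting observation is that $B(n)$ has the same rank-one-plus-diagonal structure: with $v:=(\sqrt{t-r},\sqrt{r},\sqrt{s-r},\sqrt{1-t-s+r})^{\top}$ (the coordinate vector of $\psi^{\otimes n}$) and $D:=\diag(-p,\,q-p,\,q,\,0)$, one checks directly from \eqref{eq:B} that $B(n)=vv^{\top}+D$. Since $\norm{B(n)}_1=\sum_i\abs{\lambda_i}=\Tr B(n)-2S_{-}$, where $S_{-}$ denotes the sum of the negative eigenvalues, everything reduces to locating the negative spectrum to the prescribed precision. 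I would first record the exact symmetric functions: $\Tr B(n)=\norm{v}^2+\Tr D=1+2(q-p)$, and, expanding $\det(D+vv^{\top})$ through the adjugate of the (singular) diagonal $D$, $\det B(n)=v_4^{\,2}(-p)(q-p)q=-pq(q-p)(1-t-s+r)$.

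Next I would fix the sign pattern. As $n\to\infty$ we have $v\to(0,0,0,1)^{\top}$ and $D\to 0$, so one eigenvalue tends to $1$ while the other three stay near the diagonal entries $-p,\,q-p,\,q$ of $D$. Because $vv^{\top}$ is a positive semidefinite rank-one perturbation, the eigenvalues interlace, $\lambda_k(D)\le\lambda_k(B(n))\le\lambda_{k+1}(D)$, which together with the sign of $\det B(n)$ fixes the number of negative eigenvalues exactly. Since in the shorthand distinct numbers separate by whole orders of magnitude, the relevant dichotomy is the sign of $q-p$: if $q\ge p$ there is a single negative eigenvalue near $-p$ (and then $pt<p\le q$ forces us into case \eqref{eq:4norm}), whereas if $p>q$ there are two negative eigenvalues, near $-p$ and near $q-p$. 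The remaining, finer split is that ``$pt<q$'' means $(pt/q)^n\to0$, i.e.\ $pt=o(q)=o(\min\{p,q\})$; so the case distinction $pt<q$ versus $pt\ge q$ is precisely whether the order-$pt$ correction to $S_{-}$ sits below the required precision or must be carried along.

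To extract $S_{-}$ I would isolate the needed eigenvalues by the factorising trick of Proposition \ref{prop:A}. In the one-negative regime $q\ge p$ I target the eigenvalue near $-p$ directly: writing $\lambda=-p+\Delta$ in the characteristic quartic and dividing out the almost-root $(\lambda+p)$, the fast-decaying remainder gives $\Delta\sim p(t-r)=o(p)$, so $S_{-}=-p+o(\min\{p,q\})$ and $\norm{B(n)}_1=1+p+q+\abs{p-q}+o(\min\{p,q\})$, which is \eqref{eq:4norm}. In the two-negative regime $p>q$ it is cleanest to instead pin down the two positive eigenvalues, near $1$ and near $q$: the factorising trick at $\lambda=1$ yields a correction $\inner{v,Dv}=-p(t-r)+r(q-p)+q(s-r)+\dots$, and at $\lambda=q$ a correction $-q(s-r)+\dots$, whose $q(s-r)$ parts cancel in the sum. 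Using $S_{-}=\Tr B(n)-\lambda_{\max}-\lambda_{(q)}$ and absorbing the leftover $rq=o(pt)$ (valid once $pt\ge q$, since $r\to0$), I get $S_{-}=q-2p+pt+o(pt)$ and hence $\norm{B(n)}_1=1+p+q+\abs{p-q}-2pt+o(pt)$, which is \eqref{eq:4norm_pt}; when $pt<q$ this $pt$-term is itself $o(\min\{p,q\})$ and collapses back to \eqref{eq:4norm}.

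The main obstacle is the bookkeeping in the regime $p>q$: I must verify that, for every admissible ordering of the six parameters, the subleading contributions in the factorising trick either cancel across the two isolated eigenvalues (as the $q(s-r)$ terms do) or fall below $o(pt)$ (as $rq$ does), and that no cross term built from $s$ and $r$ survives --- this is where the exponential separation of distinct numbers in the shorthand is doing the real work. Finally I would dispose of the degenerate configurations, $t=0$, $s=0$, or ties forcing $pt=q$, exactly as the extreme cases were handled at the end of Proposition \ref{prop:A}: there the quartic partially factors, the negative eigenvalue(s) can be read off in closed form, and one checks that \eqref{eq:4norm}--\eqref{eq:4norm_pt} persist.
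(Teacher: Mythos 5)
Your argument is correct, and it reaches both \eqref{eq:4norm} and \eqref{eq:4norm_pt} with the required precision, but it gets there by a partly different route than the paper. The paper works directly with the characteristic quartic: it expands the coefficients $a_0,\dots,a_3$, observes $f_n(\lambda)\to\lambda^3(\lambda-1)$, and then uses Vieta's formulas in three separate cases ($p=q$, $p<q$, $p>q$) to locate the small eigenvalues and, in particular, to count how many of them are negative. You instead exploit the structure $B(n)=vv^{\top}+D$ with $D=\diag(-p,\,q-p,\,q,\,0)$, which the paper never writes down. This buys you three things: the trace and determinant come out in one line (your adjugate computation reproduces the paper's $a_0=pq(p-q)(1-t-s+r)$ exactly); the rank-one interlacing inequality plus the sign of $\det B(n)$ fixes the number of negative eigenvalues \emph{for every} $n$, replacing the paper's asymptotic Vieta bookkeeping; and the sub-cases $p=q$ and $p<q$ merge into a single ``one negative eigenvalue'' regime. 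From that point on the two proofs coincide in strategy: both use the factorising trick for the precise corrections, and in the regime $p>q$ both compute $\norm{B(n)}_1$ through the two \emph{positive} eigenvalues rather than the two negative ones; your identification of the top-eigenvalue correction as $\inner{v,Dv}=qs-pt$ (in the shorthand of Remark \ref{re:notation}) agrees with the paper's \eqref{eq:one}, and your cancellation of the $q(s-r)$ terms and absorption of $qr=o(pt)$ matches the paper's case analysis. Two small tidying remarks: the degenerate configurations $t=0$ or $s=0$ that you defer to the end cannot occur here, since $r\neq 0$ forces $t,s\in(0,1)$ (the paper notes this at the start of its proof), so only the tie $pt=q$ needs a word, and it is already covered by your $pt\ge q$ branch because $s<1$ forces $qs<pt$ there; and your determinant-sign argument should note that $1-t^n-s^n+r^n\to 1>0$, so the sign conclusion is valid for all large $n$, which is all the limit superior requires.
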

\begin{proof}
The proof is very similar to the proof of Proposition \ref{prop:A}, just a little more complicated. We will use the same notation as there, introduced in Remark \ref{re:notation}. From $r\neq0$ it follows that $t\neq0$ and $s\neq0$ so we have $t,s,r\in(0,1)$.
The 1-norm of $B(n)$ is the sum of the absolute values of the eigenvalues.
Since we know the sum of the eigenvalues ($\Tr B(n)$), just the negative (or just the positive) eigenvalues determine the 1-norm, so we will concern to one of them. (For the easier readability, we do not denote the $n$ dependence of the eigenvalues, i.e., $\lambda_i:=\lambda_i(n)$.) 
The eigenvalues of $B(n)$ are the roots of its characteristic polynomial
\begin{align}  
f_n(\lambda):&=\det(B(n)-\lambda I),
\end{align}
which gives the following quartic equation
\begin{align}\label{eq:quartic}
\lambda^4+a_3\lambda^3+a_2\lambda^2+a_1\lambda+a_0 =0 , && 
\end{align}
where
\begin{align}  
&a_3=-1+2p-2q ,\\   
&a_2= p (-2 - 3 q + t)+q (2 + q - s) +p^2 ,\\   
&a_1=p^2 (-1 - q + t)+q^2 (-1 + s) + p q (3 + q + 2 r - 2 s - 2 t)  ,\\   
&a_0=pq(p-q)(1-t-s+r) .
\end{align}
Of course, we could solve the cubic equation exactly, but this would be a much more difficult way. Instead of that, we take advantages of that we only need the eigenvalues in a certain order of precision. Observe that $f_n(\lambda) \rightarrow \lambda^3 (\lambda-1)$, therefore
\begin{align*} 
&\lambda_1 \to 0 ,  \\ 
&\lambda_2 \to 0 ,  \\ 
&\lambda_3 \to 0 ,  \\ 
&\lambda_4 \to 1 ,  
\end{align*}
so $\lambda_4=1+o(1)$. In the first step, we will improve the order of the approximation of this eigenvalue, what will help us to determine the other ones. For this we will use the ``factorising trick'' as before, that is, $1$ is ``almost'' a root of the quartic equation \eqref{eq:quartic} thus it can ``almost'' be factorising by $(\lambda_4-1)$ aside from the (``fast decreasing'') extra term
\begin{align*}
    (\lambda_4-1)&\Big(\lambda_4^3+2(p-q) \lambda_4^2+(p^2 - 3 p q + q^2 - q s + p t) \lambda_4 + \\
&+ p q^2-p^2 q  + 2 p q r - q s - 2 p q s + q^2 s + p t + p^2 t -  2 p q t \Big) 
+ p t(1-q) -q s(1+p) =0 ,
\end{align*}
from which 
\[
(\lambda_4-1)\Big( \lambda_4^3 +(2 p - 2 q)\lambda_4^2 +(p^2 - 3 p q + q^2 - q s + p t)\lambda_4 +o(p+q)  \Big)\sim qs-pt .
\]
We can write $\lambda_4=1+\Delta$ where $\Delta=o(1)$, with this, we have
\[
\Delta\Big((1+\Delta)^3+((1+\Delta)^2+(1+\Delta))o(1)\Big)\sim qs-pt .
\]
Since $\Delta=o(1)$, the second factor on the LHS is $\sim 1$, it follows that $\Delta\sim qs-pt$ so we get that
\begin{equation}\label{eq:one}
\lambda_4=1+qs-pt+o(qs-pt) ,
\end{equation}
what particularly means that $\lambda_4=1+o(p+q)$. 

Now we turn our focus to the other three eigenvalues, and we will prove the proposition considering the three cases $p=q$, $p<q$ and $p>q$ separately. We will use the following (Vieta's) formulas several times 
 \begin{align} \label{eq:sum}
 \lambda_1+\lambda_2+\lambda_3+\lambda_4&=-a_3 =1-2p+2q ,\\ \label{eq:prod}
 \lambda_1\lambda_2\lambda_3\lambda_4&=a_0 =pq(p-q)(1-t-s+r) ,\\  \label{eq:vieta}
 \lambda_1\lambda_2\lambda_3+\lambda_1\lambda_2\lambda_4+\lambda_1\lambda_3\lambda_4+\lambda_2\lambda_3\lambda_4&=-a_1\sim p^2+q^2-3pq .
\end{align}
Obviously, $\Tr B(n)=-a_3$ and $\det B(n)=a_0$.

\emph{Case} $p=q$: (so obviously $pt<q$) In this case, $a_0=0$, so we have a zero eigenvalue
\begin{equation}\label{eq:zero}
    \lambda_1=0 ,
\end{equation}
and \eqref{eq:one} means that $\lambda_4=1+o(p)$. With these results, \eqref{eq:sum} and \eqref{eq:vieta} takes the following form
\begin{align*} 
\lambda_2+\lambda_3 &=o(p)  ,\\ 
\lambda_2 \lambda_3 &\sim -p^2  .
\end{align*}
From this, it is easy to see that
\begin{align}
    \lambda_2&=-p+o(p) , \\
    \lambda_3 &=p+o(p) .
\end{align}
So $\lambda_2$ is the only negative eigenvalue. With this, the 1-norm is
\[
\norm{B(n)}_1=\Tr (B(n))-2\lambda_2=1+2p+o(p) ,
\]
which is the same as \eqref{eq:4norm} so we finished the proof in this case.

\emph{Case} $p<q$: (so obviously $pt<q$)
In this case, \eqref{eq:one} means that $\lambda_4=1+o(q)$. 
With this, \eqref{eq:sum} and \eqref{eq:prod} takes the following form
\begin{align} \label{eq:4sum}
\lambda_1+\lambda_2+\lambda_3 &\sim 2q  ,\\ 
\lambda_1\lambda_2 \lambda_3 &\sim -pq^2  .\label{eq:4prod}
\end{align}
Without loss of generality, we assume that $\abs{\lambda_1} \leq \abs{\lambda_2} \leq \abs{\lambda_3}$. From \eqref{eq:4sum} it follows that $\lambda_3 \approx q$. With this, from \eqref{eq:4prod}, it follows that $\lambda_1=o(q)$. 
With these results for $\lambda_1$ and  $\lambda_3$\,, \eqref{eq:vieta} takes the following form (since: 1. term $<$ 3. term,  2. term $<$ 4. term)
\begin{equation}\label{eq:lambda123}
(\lambda_1+\lambda_2)\lambda_3\sim q^2 .
\end{equation}
Since $\lambda_3\approx q$, it follows that $\lambda_2\approx q$ and since $\lambda_1=o(q)$, \eqref{eq:4sum} and \eqref{eq:lambda123} take the following form
\begin{align}  
\lambda_2+\lambda_3  &\sim 2q , \\ 
\lambda_2 \lambda_3 &\sim q^2 .
\end{align}
From this, $\lambda_2\sim \lambda_3\sim q$, and by \eqref{eq:4prod}, $\lambda_1\sim-p$. So we have
\begin{align}\label{eq:pq}
\lambda_1   &\sim -p  , \nonumber\\ 
\lambda_2   &\sim q  , \\ 
\lambda_3 &\sim q  . \nonumber
\end{align}
That is, $\lambda_1=-p+o(p)$ is the only negative eigenvalue. With this, the 1-norm is
\[
\norm{B(n)}_1=\Tr (B(n))-2\lambda_1=1+2q+o(p)
\]
which is the same as \eqref{eq:4norm} so we finished the proof in this case.

\emph{Case} $p>q$: In this case both $pt<q$ and $pt \geq q$ are possible.
It can be shown that
\begin{align}\label{eq:p>q}
\lambda_1   &\sim -p ,\nonumber\\ 
\lambda_2   &\sim -p  ,\\ 
\lambda_3 &\sim q  ,\nonumber
\end{align}
in the very same way as we showed \eqref{eq:pq} (so we omit the proof of it). So, this time we have two positive and two negative eigenvalues. We take advantages of that we have already known the two positive eigenvalues  $\lambda_1$ and $\lambda_2$ with the required precision (from  \eqref{eq:p>q} and \eqref{eq:one}).

First, we consider the case $q> pt$. By \eqref{eq:p>q} and \eqref{eq:one} we have that
\begin{align*}
\lambda_3 &=q+o(q) , \\
\lambda_4 &=1+o(q) .
\end{align*}
With this, the 1-norm is
\[
\norm{B(n)}_1=-\Tr (B(n))+2(\lambda_3+\lambda_4)=1+2p+o(q),
\]
which is the same as \eqref{eq:4norm}, so we finished the proof in this case.

Now we consider the case when $q\leq pt$. By \eqref{eq:p>q} and \eqref{eq:one}, we have that
\begin{align*}
\lambda_3 &=q+o(q)=q+o(pt) , \\
\lambda_4 &=1-pt+o(pt) .
\end{align*}
With this, the 1-norm is
\[
\norm{B(n)}_1=-\Tr (B(n))+2(\lambda_3+\lambda_4)=1+2p-2pt+o(pt) ,
\]
which is the same as \eqref{eq:4norm_pt}, so we finished the proof in this case.
So we finished the proof of Proposition \ref{prop:B}.
\end{proof}
Now we have everything to give the proof of Theorem \ref{th:mywork}.

\newpage
\begin{proof}[Proof of the Theorem \ref{th:mywork}]

By Remark \ref{re:th} and Proposition \ref{prop:RHS}, it is enough to show that
\begin{equation}\label{eq:need}
    \overline{p}_e(\vec \varrho,\, \vec\sigma_1+\vec\sigma_2) =\log\Big(\max\Big\{R\min\{p,q\},pt\Big\}\Big) .
\end{equation}
We will consider the case $r=0$ and $r \neq 0$ separately.

\emph{Case} $r=0$: (in this case, $R$ could be $0,1,...$) By Proposition \ref{prop:LHS} and \ref{prop:A}, it follows immediately that
\begin{align*}
    \overline{p}_e(\vec \varrho,\, \vec\sigma_1+\vec\sigma_2) 
    &= \limsup_{n \to \infty} \frac{1}{n} \log \Bigg(2\parr[\Big]{R\min\{p,q\}}^n+2(pt)^n +o((pt)^n) \Bigg) .
\end{align*} 
Since the ordinary limit exists, it is equal to the limit superior. The factor 2 disappears by the limit (cf. Remark \ref{re:without}), and we get the RHS of \eqref{eq:need}.

\emph{Case} $r\neq0$: (then obviously $R\neq 0$). By Proposition \ref{prop:LHS} and \ref{prop:B}, it follows immediately that, if $pt<q$,
\begin{align*}
    \overline{p}_e(\vec \varrho,\, \vec\sigma_1+\vec\sigma_2) 
    &= \limsup_{n \to \infty} \frac{1}{n} \log \Bigg(2\parr[\Big]{R\min\{p,q\}}^n+o((\min\{p,q\})^n) \Bigg)
\end{align*} 
which coincides with the RHS of \eqref{eq:need}. If $pt\geq q$, it follows that
\begin{align*}
    \overline{p}_e(\vec \varrho,\, \vec\sigma_1+\vec\sigma_2) 
    &= \limsup_{n \to \infty} \frac{1}{n} \log \Bigg(2\parr[\Big]{R\min\{p,q\}}^n+2(pt)^n +o((pt)^n) \Bigg) ,
\end{align*} 
which gives the RHS of \eqref{eq:need}, as we have seen above.
We finished the proof of Theorem \ref{th:mywork}.
\end{proof}

\begin{remark}
The reader may wonder what causes the slight difference between Proposition \ref{prop:A} and \ref{prop:B}, that is, why we have to investigate separate cases in Proposition \ref{prop:B}, while in Proposition \ref{prop:A} we have not. It is because of the following. Observe that in the case $r=0$, the parameter $R$ could be $0$ (independently of the others), so the RHS of \eqref{eq:need} could be $\log (pt)$ (independently of $p,q,t,r,s$). Therefore we always have to get the term $-2(pt)^n$ in the 1-norm of $A(n)$ (in the view of the above proof). In contrast, in the case $r\neq0$, the parameter $R$ cannot be $0$, so we have to investigate separate cases in the case of $B(n)$.
\end{remark}

\pagebreak

\printbibliography
\addcontentsline{toc}{section}{\protect\numberline{}References}

\begin{textblock*}{5.3cm}(10cm,22cm) 
   And finally...\\[3pt] \textit{"The Answer to the Ultimate Question of Life, The Universe,\\ and Everything."}
\end{textblock*}

\end{document}